\newcommand{\abs}[1]{\ensuremath{\lvert #1\rvert}}
\newcommand{\tuple}[1]{\langle #1 \rangle}
\newcommand{\C}{{\sf C}}
\newcommand{\nat}{\mathbb N} 
\newcommand{\zed}{{\mathbb Z}}
\newcommand{\set}[1]{\{#1\}}
\newcommand{\rat}{{\mathbb Q}}
\newcommand{\straa}{\sigma}
\newcommand{\strab}{\pi}
\newcommand{\Attr}{Attr}
\newcommand{\Plays}{{\sf Plays}}
\newcommand{\Inf}{{\sf Inf}}
\newcommand{\EL}{{\sf EL}}
\newcommand{\MP}{{\sf MP}}
\newcommand{\Parity}{{\sf Parity}}
\newcommand{\PosEnergy}{{\sf PosEnergy}}
\newcommand{\MeanPayoff}{{\sf MeanPayoff}}
\newcommand{\M}{{\sf M}}
\newcommand{\Bad}{{\sf Bad}}
\newcommand{\Win}{{\sf Win}}
\newcommand{\gfe}{\hspace{-1pt}{\it gfe}}
\newcommand{\GFE}{{\sf GFE}}
\title{Energy Parity Games\thanks{This is an improved version of a paper that appeared in the 
\emph{Proceedings of the 37th International Colloquium on Automata, Languages and Programming} (ICALP),
Lecture Notes in Computer Science 6199, Springer-Verlag, 2010, pages 599-610. The present version
contains detailed proofs, and improved memory and algorithmic complexity bounds.
}}
\author{Krishnendu Chatterjee\inst{1} \and Laurent Doyen\inst{2}}
\institute{
IST Austria (Institute of Science and Technology Austria) \\
\and LSV, ENS Cachan \& CNRS, France 
}
\begin{document}
\maketitle

\begin{abstract} 
Energy parity games are infinite two-player turn-based games played on weighted graphs.
The objective of the game combines a (qualitative) parity condition with 
the (quantitative) requirement that the sum of the weights (i.e., the
level of energy in the game) must remain positive. 
Beside their own interest in the design and synthesis of resource-constrained 
omega-regular specifications, energy parity games provide one of the simplest 
model of games with combined qualitative and quantitative objective. 
Our main results are as follows: (a) exponential memory is sufficient and 
may be necessary for winning strategies in energy parity games; 
(b) the problem of deciding the winner in energy parity games can be solved 
in NP~$\cap$~coNP; and 
(c) we give an algorithm to solve energy parity by reduction to energy games. 
We also show that the problem of deciding the winner in energy parity games is 
polynomially equivalent to the problem of deciding the winner in mean-payoff
parity games, which can thus be solved in NP~$\cap$~coNP.
As a consequence we also 
obtain a conceptually simple algorithm to solve 
mean-payoff parity games.

\end{abstract}

\smallskip\noindent{\bf Keywords:} {\em Games on graphs; Parity objectives; Quantitative objectives.}

\section{Introduction}
Two-player games on graphs are central in many applications
of computer science.  For example, in the synthesis problem implementations are 
obtained from winning strategies in games with a qualitative objective such as 
$\omega$-regular specifications~\cite{RW87,PnueliR89,AbadiLW89}.
Games also provide a theoretical instrument to deal with logics and 
automata~\cite{BL69,GurevichHarrington82,EmersonJS93,automata}.
In all these applications, the games have a qualitative (boolean) objective 
that determines which player wins. 
On the other hand, games with quantitative objective which are natural models 
in economics (where players have to optimize a real-valued payoff) have also been studied 
in the context of automated design and synthesis~\cite{Sha53,Condon92,ZwickP96}.
In the recent past, there has been considerable interest in the design of 
reactive systems that work in resource-constrained environments 
(such as embedded systems).
The specifications for such reactive systems have both a quantitative 
component (specifying the resource constraint such as limited power consumption) and 
a qualitative component (specifying the functional requirement).
The desired reactive system must respect both the qualitative and quantitative
specifications.
Only recently objectives combining both qualitative and quantitative 
specifications have been considered~\cite{CAHS03,ChatterjeeHJ05,BCHJ09}.

In this paper, we consider two-player turn-based games played for infinitely 
many rounds on a weighted graph where a \emph{priority} is associated to each 
state and an integer \emph{weight} (encoded in binary) is associated to each edge.
In each round, the player owning the current state chooses an outgoing edge
to a successor state, thus the game results in an infinite play.
The qualitative specification is a \emph{parity} condition, 
a canonical way to express the $\omega$-regular objectives~\cite{Thomas97}.
A play satisfies the parity condition if the least priority occurring 
infinitely often in the play is even. The quantitative specification is an \emph{energy} 
condition which requires 
that the sum of the weights along the play (that we interpret as the level of 
energy, or resource usage) remains always positive. 
Energy parity games can be viewed as games played on one-counter automata 
with fairness condition. 
The main algorithmic question about energy parity games
is to decide if there exists an initial credit (or initial energy level) such 
that one player has a strategy to maintain the level of energy positive while
satisfying the parity condition, and if the answer is yes, to compute the 
minimum such initial credit.

Energy parity games generalize both parity games and energy games. It is known
that memoryless strategies are sufficient to win parity games~\cite{EJ91}
and energy games~\cite{CAHS03,BFLMS08}, and therefore the problem of deciding 
the winner of a parity game, and the problem of deciding the existence of an 
initial credit sufficient to win an energy game are both in NP~$\cap$~coNP. 
It is a long standing open question to know if these problems can be solved in 
polynomial time.
In this paper, we present the following results about energy parity games: 
(1) we study the complexity of winning strategies and we give bounds on the 
amount of memory needed to win;
(2) we establish the computational complexity of the problem of deciding the winner; 
(3) we present an algorithmic solution to compute the minimum initial credit; and 
(4) we show polynomial equivalence with mean-payoff parity games. 
The details of our contributions are as follows. 
\begin{enumerate}

\item \emph{Strategy complexity.}
First, we show that finite-memory strategies are sufficient to win energy 
parity games, but exponential memory may be required even in the special 
case of one-player games.
We present an exponential memory upper bound for the winning strategies.
Our memory bound is $n\cdot d \cdot W$, where $n$ is the size of the state
space, $d$ is the number of priorities, and 
$W$ is the maximum absolute value of the weights.
This bound is exponential since $W$ can be encoded in $\log W$ bits,
but polynomial in $n$ and $d$.
We show that the spoiling strategies of the opponent need no 
memory at all (memoryless spoiling strategies exist). 

\item \emph{Computational complexity.} 
Second, we show that the decision problem for energy parity games lie in NP $\cap$ coNP,
matching the bounds known for the simpler case of parity and energy games. 
The classical NP~$\cap$~coNP result for parity and energy games crucially relies
on the existence of memoryless winning strategies. 
In the case of energy parity games, the existence of memoryless spoiling strategies
gives the coNP upper bound.
However, and in contrast with parity games and energy games, 
winning strategies may require exponential memory in energy parity games. 
Therefore, more subtle arguments are needed to obtain the NP upper bound: we show that the 
winning strategies (that require exponential memory) can be characterized with 
certain special structures and decomposed into two memoryless strategies 
(roughly, one to ensure the parity condition, and the other to maintain the 
energy level positive). 
This insight allows us to derive a nondeterministic polynomial-time
algorithm to solve energy parity games. 
Thus the problem of deciding the existence of an initial credit which is 
sufficient to win an energy parity game is (perhaps surprisingly) in 
NP~$\cap$~coNP. 
Finding a deterministic polynomial algorithm for this problem is obviously open.

\item \emph{Algorithm.}
Third, we present an algorithm to solve energy parity games with complexity
exponential in the number of states~$n$ (as for parity games), and linear in the largest
weight~$W$ (as for energy games). This algorithm relies on our analysis of the
structure of winning strategies, and reduces to iteratively solving reachability games
and energy games.

\item \emph{Equivalence with mean-payoff parity games.} 
Finally, we show that energy parity games are polynomially equivalent 
to mean-payoff  parity games~\cite{ChatterjeeHJ05}, where the parity condition is combined 
with the quantitative requirement that the limit-average (or mean-payoff) of the weights 
remains positive. 
Again, this result is surprising because in mean-payoff parity games, optimal strategies 
(that realize the largest possible mean-payoff value while satisfying the parity condition) may require 
infinite memory. Moreover, we get as a corollary of our results that the problem of deciding 
the winner in mean-payoff parity games is also in NP~$\cap$~coNP. 
Our algorithm for energy parity games can be used to solve mean-payoff parity games
with essentially the same complexity as in~\cite{ChatterjeeHJ05}, 
but with a conceptually simpler approach. 
\end{enumerate}

\smallskip\noindent\emph{Relation to one-counter parity games.} 
Energy parity games can be reduced to one-counter parity games~\cite{Serre06},
where the counter can be incremented and decremented only by~1 (i.e., the 
weights are in $\set{-1,0,1}$). 
Since the weights in energy parity games are encoded succinctly in binary, the 
reduction is exponential. 
It was shown that exponential memory is sufficient in one-counter parity games, 
and that they can be solved in PSPACE and are DP-hard~\cite{Serre06},
showing that  one-counter parity games are more general.
The exponential reduction and results on one-counter parity games
would give an EXPSPACE upper bound for the problem, and a double exponential ($2^{n \cdot d \cdot W}$) 
upper bound on memory, whereas we show that the problem is 
in NP $\cap$ coNP, and exponential memory is sufficient.

\section{Definitions}

\noindent{\bf Game graphs.}
A \emph{game graph} $G=\tuple{Q, E}$ consists of a finite set $Q$ of states 
partitioned into \mbox{player-$1$} states $Q_1$ and player-2 states $Q_2$ (i.e., $Q=Q_1 \cup Q_2$), 
and a set $E \subseteq Q \times Q$ of edges such that for all $q \in Q$,
there exists (at least one) $q' \in Q$ such that $(q,q') \in E$.
A \emph{player-$1$ game} is a game graph where $Q_1 = Q$ and $Q_2 = \emptyset$.
The subgraph of $G$ induced by $S \subseteq Q$ is the graph 
$\tuple{S, E \cap (S \times S)}$ (which is not a game graph in 
general); the subgraph induced by $S$ is a game graph if for all 
$s \in S$ there exist $s' \in S$ such that $(s,s')\in E$.

\smallskip\noindent{\bf Plays and strategies.}
A game on $G$ starting from a state $q_0 \in Q$ is played in rounds as follows. 
If the game is in a player-1 state, then player~$1$ chooses the successor state from the set
of outgoing edges; otherwise the game is in a player-$2$ state, and player $2$ chooses the successor 
state. 
The game results in a \emph{play} from~$q_0$, i.e., 
an infinite path $\rho = q_0 q_1 \dots$ such that $(q_i,q_{i+1}) \in E$ for all $i \geq 0$. 
The prefix of length $n$ of $\rho$ is denoted by $\rho(n) = q_0 \dots q_n$. 
The \emph{cycle decomposition} of $\rho$ is an infinite sequence of simple cycles $C_1, C_2, \dots$
obtained as follows: push successively $q_0, q_1, \dots$ onto a stack, and 
whenever we push a state already in the stack, a simple cycle is formed that we remove 
from the stack and append to the cycle decomposition. Note that the stack content
is always a prefix of a path of length at most~$\abs{Q}$.

A \emph{strategy} for player~$1$ is a function
$\straa: Q^*Q_1 \to Q$ such that $(q,\straa(\rho\cdot q)) \in E$ for all $q \in Q_1$
and all $\rho \in Q^*$. An \emph{outcome} of $\straa$ from~$q_0$ is a play $q_0 q_1 \dots$ such that 
$\straa(q_0 \dots q_i) = q_{i+1}$ for all $i \geq 0$ such that $q_i \in Q_1$. Strategy and outcome for
player~$2$ are defined analogously.

\smallskip\noindent{\bf Finite-memory strategies.}
A strategy uses \emph{finite-memory} if it can be encoded
by a deterministic transducer $\tuple{M, m_0, \alpha_u, \alpha_n}$
where $M$ is a finite set (the memory of the strategy), $m_0 \in M$ is the initial memory value,
$\alpha_u: M \times Q \to M$ is an update function, and $\alpha_n: M \times Q_1 \to Q$ is a next-move function. 
The \emph{size} of the strategy is the number $\abs{M}$ of memory values.
If the game is in a player-$1$ state $q$ and $m$ is the current memory value,
then the strategy chooses $q' = \alpha_n(m,q)$ as the next 
state and the memory is updated to $\alpha_u(m,q)$. 
Formally, $\tuple{M, m_0, \alpha_u, \alpha_n}$
defines the strategy $\alpha$ such that $\alpha(\rho\cdot q) = \alpha_n(\hat{\alpha}_u(m_0, \rho), q)$
for all $\rho \in Q^*$ and $q \in Q_1$, where $\hat{\alpha}_u$ extends $\alpha_u$ to sequences
of states as expected. A strategy is \emph{memoryless} if $\abs{M} = 1$.
For a finite-memory strategy $\straa$, let $G_{\straa}$ be the graph obtained as the product
of $G$ with the transducer defining $\straa$, where $(\tuple{m,q},\tuple{m',q'})$ is a transition
in $G_{\straa}$ if $m' = \alpha_u(m,q)$ and either $q \in Q_1$ and $q'=\alpha_n(m,q)$, or $q \in Q_2$ and $(q,q') \in E$.
In $G_{\straa}$, the expression \emph{reachable from $q$} stands for \emph{reachable from $\tuple{q,m_0}$}.

\smallskip\noindent{\bf Objectives.}
An \emph{objective} for $G$ is a set $\phi \subseteq Q^\omega$. 
Let $p:Q \to \nat$ be a \emph{priority function} and $w:E \to \zed$ be a \emph{weight function}\footnote{In some proofs, 
we take the freedom to use rational weights (i.e., $w:E \to \rat$), while we always assume that weights are integers encoded in binary for complexity results.}  
where positive numbers represent rewards, and negative numbers represent costs. 
We denote by $W$ the largest weight of an edge (in absolute value) according to $w$.
The \emph{energy level} of a prefix $\gamma = q_0 q_1 \dots q_n$ of a play
is $\EL(w,\gamma) = \sum_{i=0}^{n-1} w(q_i,q_{i+1})$, and the \emph{mean-payoff value}
of a play $\rho= q_0 q_1 \dots$ is 
$\MP(w,\rho) = \liminf_{n \to \infty} \frac{1}{n}\cdot\EL(w,\rho(n))$. 
In the sequel, when the weight function~$w$ is clear from context we will omit it and
simply write $\EL(\gamma)$ and $\MP(\rho)$.
We denote by $\Inf(\rho)$ the set of states that occur infinitely often in $\rho$.
We consider the following objectives:

\begin{itemize}
 	\item \emph{Parity objectives.}
	The \emph{parity} objective $\Parity_G(p) = \{\rho \in \Plays(G) \mid \min\{p(q) \mid q \in \Inf(\rho)\} \text{ is even }\}$
	requires that the minimum priority visited infinitely often be even. 
	The special cases of \emph{B\"uchi} and \emph{coB\"uchi} objectives correspond
	to the case with two priorities, $p: Q \to \{0,1\}$ and $p: Q \to \{1,2\}$ respectively.

	\item \emph{Energy objectives.}
	Given an initial credit $c_0 \in \nat \cup \{\infty\}$, the \emph{energy} objective 
	$\PosEnergy_G(c_0) = \{ \rho \in \Plays(G) \mid \forall n \geq 0:  c_0 + \EL(\rho(n)) \geq 0 \}$
	requires that the energy level be always positive.

	\item \emph{Mean-payoff objectives.}
	Given a threshold $\nu \in \rat$, the \emph{mean-payoff} objective 
	$\MeanPayoff_G(\nu) = \{ \rho \in {\sf Plays}(G) \mid \MP(\rho) \geq \nu \}$
	requires that the mean-payoff value be at least $\nu$.

	\item \emph{Combined objectives.}
	The \emph{energy parity} objective $\Parity_G(p) \cap \PosEnergy_G(c_0)$
	and the \emph{mean-payoff parity} objective $\Parity_G(p) \cap \MeanPayoff_G(\nu)$
	combine the requirements of parity and energy (resp., mean-payoff) objectives.
\end{itemize}
\noindent
Note that parity objecitves are \emph{prefix-independent}, i.e. for all plays
$\rho$ and $\rho'$ such that $\rho' = \gamma \cdot \rho$ where $\gamma$ is a finite prefix,
we have $\rho \in \Parity_G(p)$ iff $\rho' \in \Parity_G(p)$.

When the game~$G$ is clear form the context, we omit the subscript in objective names.

\smallskip\noindent{\bf Winning strategies.}
A player-$1$ strategy~$\straa$ is \emph{winning}\footnote{We also say that player-$1$ is winning, or that $q$ is a winning state.} 
in a state~$q$ for an objective~$\phi$ if 
$\rho \in \phi$ for all outcomes~$\rho$ of~$\straa$ from~$q$.
For energy and energy parity objectives with unspecified initial credit, we also say that a strategy is winning
if it is winning for some finite initial credit.

\smallskip\noindent{\bf Finite and minimum initial credit problems.}
We are interested in the following decision problem.
The \emph{finite initial credit problem} (initial credit problem for short) 
asks, given an energy parity game $\tuple{G,p,w}$ and a state~$q$, 
whether there exists a finite initial credit $c_0 \in \nat$ and a 
winning strategy for player~$1$ from~$q$  with initial credit~$c_0$.
The \emph{minimum initial credit} in a state $q \in Q$
is the least value of initial credit for which there exists a winning strategy 
for player~$1$ in $q$.
A strategy for player~$1$ is \emph{optimal} in a state $q$ if it is winning 
from $q$  with the minimum initial credit.

It is known that the initial credit problem for simple energy games can be solved in NP~$\cap$~coNP
because memoryless strategies are sufficient to win such games~\cite{CAHS03,BFLMS08}.
For winning states of energy games, an initial credit of $(\abs{Q} -1)\cdot W$ is always sufficient 
to win.
For parity games, memoryless strategies are also sufficient to win and the associated decision
problem also lies in NP~$\cap$~coNP~\cite{EJ91}. 
Moreover, energy games and parity
games are determined, which implies that from states that are not winning for player~$1$,
there exists a (memoryless) spoiling strategy for player~$2$ which is winning for the complementary objective
(note that the complement of a parity objective is again a parity objective). Moreover, for energy
games, the same spoiling strategy can be used against all initial credit values.

\section{Strategy Complexity of Energy Parity Games}\label{sec:memory-credit}

In this section we show that in energy parity games with $n$ states and $d$ priorities,
memory of size $n \cdot d \cdot W$ is sufficient for a winning strategy of player~$1$.
This amount of memory is exponential (because weights are encoded in binary) and
we show that exponential memory is already necessary in the special case of player-1 games
with two priorities where memory of size $2 \cdot (n-1) \cdot W + 1$ may be necessary (and is always sufficient).
For player~$2$, we show that memoryless winning strategies exist.
Moreover, if player~$1$ wins, then the minimum initial credit is always at most $(n-1) \cdot W$.

\begin{figure}[!tb]
  \begin{center}
    \hrule
    \begin{picture}(105,25)(0,0)


\node[Nmarks=ir,NLangle=0.0](n0)(10,10){$0$}
\node[Nmarks=n](n1)(30,10){$1$}

\node[Nframe=n](phantom1)(50,10){}
\node[Nframe=n](dots)(55,10){$\dots$}
\node[Nframe=n](phantom2)(60,10){}

\node[Nmarks=n](n3)(80,10){$1$}
\node[Nmarks=n](n4)(100,10){$1$}

\drawloop[ELside=l,loopCW=y, loopdiam=6](n4){$1$}


\drawedge[ELpos=50, ELside=l, curvedepth=4](n0,n1){$-W$}
\drawedge[ELpos=50, ELside=l, curvedepth=4](n1,n0){$-W$}

\drawedge[ELpos=50, ELside=l, curvedepth=4](n1,phantom1){$-W$}
\drawedge[ELpos=50, ELside=l, curvedepth=4](phantom1,n1){$-W$}

\drawedge[ELpos=50, ELside=l, curvedepth=4](phantom2,n3){$-W$}
\drawedge[ELpos=50, ELside=l, curvedepth=4](n3,phantom2){$-W$}

\drawedge[ELpos=50, ELside=l, curvedepth=4](n3,n4){$-W$}
\drawedge[ELpos=50, ELside=l, curvedepth=4](n4,n3){$-W$}


\end{picture}
    \hrule
    \caption{A family of $1$-player energy parity games where Player~$1$ needs 
	memory of size $2 \cdot (n-1) \cdot W$ and initial credit 
	$(n-1) \cdot W$.
	Edges are labeled by weights, states by priorities.  \label{fig:initial-credit}}
  \end{center}
\end{figure}
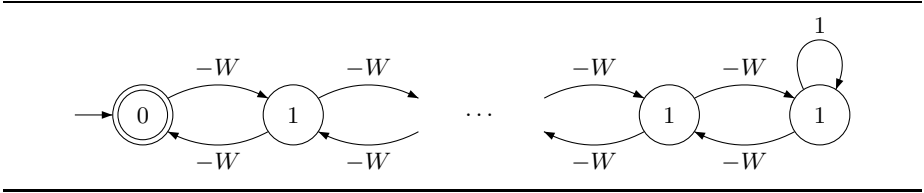


\begin{lemma}\label{lem:one-player-epg}
Let $G$ be a player-$1$ energy parity game with $n$ states. 
If player~$1$ wins in $G$ from a state $q_0$, then player~$1$ has a winning strategy 
from $q_0$ with memory of size $2 \cdot (n-1) \cdot W + 1$ and initial credit
$(n-1) \cdot W$. 
\end{lemma}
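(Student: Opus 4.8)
The plan is to exploit that $G$ is a player-$1$ game, so a winning strategy from $q_0$ is nothing but a single infinite play $\rho$ that, for some finite credit $c_0$, keeps the energy $\geq -c_0$ while satisfying parity. I would analyze the recurrent structure of one such winning play and then re-encode it as a finite-memory strategy whose memory is the \emph{current energy level}. First I would fix a winning play $\rho$ from $q_0$, let $2k$ be the minimum priority occurring infinitely often (even by hypothesis), fix a state $q^*$ with $p(q^*)=2k$ that is visited infinitely often, and set $U=\Inf(\rho)$. Because between any two visits to states of $U$ the play travels along edges, $U$ together with the edges used infinitely often is strongly connected, every state of $U$ has priority $\geq 2k$, and $q^* \in U$.

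Next I would split on whether the reachable recurrent region admits a non-negative simple cycle through $q^*$. \textbf{Case 1:} if such a cycle $C$ exists, the strategy reaches $C$ along a simple path and then loops $C$ forever. This is essentially memoryless; the minimum priority on $C$ is $2k$ (even), and $C$ being non-negative keeps the energy bounded below, with the dip inside the simple cycle and along the reaching path at most $(n-1)\cdot W$. \textbf{Case 2:} otherwise every simple cycle through $q^*$ is strictly negative. Applying the cycle decomposition of the tail of $\rho$ (pushing states on a stack and popping simple cycles, so that the stored prefix never drops below $-(n-1)\cdot W$), the fact that $q^*$ recurs infinitely often yields infinitely many strictly negative $q^*$-cycles; since the energy stays bounded below, the decomposition must also contain a strictly positive simple cycle $R\subseteq U$. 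The strategy then alternates a \emph{recharge} phase, looping $R$ to build up energy, with an \emph{excursion} consisting of a simple path from $R$ to $q^*$ and a simple path back to $R$ (both exist as $U$ is strongly connected), after which it resumes on $R$. This visits $q^*$ infinitely often and maintains non-negative energy.

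In both cases I would implement the strategy with memory equal to the current energy level capped at $M=2\cdot(n-1)\cdot W$, so that the next move is a function of the current state and the stored level; this gives $M+1 = 2\cdot(n-1)\cdot W + 1$ memory values. The window $2\cdot(n-1)\cdot W$ is precisely what one excursion can cost: the outgoing and returning simple paths have at most $n-1$ edges each, so a round trip lowers the energy by at most $2\cdot(n-1)\cdot W$; accumulating that amount on $R$ before each excursion and never storing more (capping) confines the energy to $[0, 2\cdot(n-1)\cdot W]$ after the reaching phase. For the initial credit, reaching $C$ (resp.\ $R$) uses a simple path of at most $n-1$ edges, so its energy drops by at most $(n-1)\cdot W$; hence credit $(n-1)\cdot W$ keeps the energy non-negative throughout, matching the energy-game bound recalled in the preliminaries, and afterwards the recharge loop only raises the energy until a safely funded excursion is taken.

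I expect the main obstacle to be justifying that the energy level \emph{alone} is an adequate memory: one must argue that the recharge/excursion schedule can be folded so that every move depends only on the pair (state, capped level), and that this pairing is never used inconsistently on the outgoing and returning legs of an excursion, while no excursion is ever started with insufficient banked energy. The two quantitative ingredients driving this — bounding an excursion's cost by $2\cdot(n-1)\cdot W$ via the simple-path length, and guaranteeing a positive recharge cycle in Case~2 via the cycle decomposition — are the crux; Case~1 and the $(n-1)\cdot W$ credit bound are comparatively routine.
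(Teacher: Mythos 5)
Your proposal is correct and follows essentially the same route as the paper's own proof: both analyze the recurrent part $\Inf(\rho)$ of a winning play via the stack-based cycle decomposition, and then either loop forever on a single non-negative cycle containing the minimal even priority, or alternate pumping a strictly positive cycle with bounded-cost ($\leq 2\cdot(n-1)\cdot W$) excursions to the minimal-priority state, implementing the result with memory that tracks the capped energy level. The differences are cosmetic—you split cases on the existence of a non-negative simple cycle through $q^*$ whereas the paper splits on the signs of the decomposition cycles (so your two cases are the paper's two cases with roles swapped), and the transducer-consistency issue you flag as the main obstacle is left equally implicit in the paper, which simply asserts the memory bound $2\cdot(n-1)\cdot W+1$ for the same pump-and-excursion construction.
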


\begin{proof}
Since $G$ is a player-$1$ energy parity game, we have $Q_1 = Q$ and $Q_2 = \emptyset$.
Consider an outcome~$\rho$ of an optimal strategy for player~$1$ in~$G$.
Note that the minimal priority of the states in $\Inf(\rho)$ is even, 
that $\Inf(\rho)$ is strongly connected, and that there exists a suffix 
$\rho'$ of $\rho$ that only contains states in $\Inf(\rho)$.
Let $C_1, C_2, \dots$ be the cycle decomposition of $\rho'$. We consider two cases. 

First, if $\EL(C_i) > 0$ for some cycle $C_i$,
then we construct a winning strategy for player~$1$ as follows. From the starting state,
reach a state of $C_i$ and go through $C_i$ once. This can be done with initial credit 
$(n-1) \cdot W$.
Now, pump the cycle to get the energy level above $2 \cdot (n-1) \cdot W$,
and then reach a state of $\Inf(\rho)$ with minimal priority (this consumes at 
most $(n-1) \cdot W$ units of energy) and go back to the cycle (which also 
consumes at most $(n-1) \cdot W$ units of energy). Hence, at this point 
the energy level is still positive, and we can iterate $(i)$ pumping the positive
cycle, $(ii)$ reach the minimal even priority, and $(iii)$ go back to the cycle.
This defines a winning strategy with memory of size $2 \cdot (n-1) \cdot W + 1$ and initial credit
$(n-1) \cdot W$. 

Second, if $\EL(C_i) \leq 0$ for all cycles $C_i$ ($i \geq 1$),
then it is easy to see that there exists $k \geq 1$ such that 
$\EL(C_j) = 0$ for all $j \geq k$. Since the parity condition 
is satisfied in $\rho$, the minimal priority of the states in $\Inf(\rho)$
is visited by some cycle $C_j$ ($j \geq k$). We construct a winning strategy 
for player~$1$ as follows. From the starting state, reach a state of $C_j$ 
and go through $C_j$ forever. This can be done with initial credit $(n-1) \cdot W$
and it is clearly a winning strategy.

In both cases, player~$1$ wins with memory of size $2 \cdot (n-1) \cdot W + 1$ and initial credit
$(n-1) \cdot W$.
\qed
\end{proof}

\begin{example}[Memory requirement]\label{examp:memory}
We present a family of player-$1$ games where memory of size  $2 \cdot (n-1) \cdot W + 1$ 
may be necessary. 
The example is shown in \figurename~\ref{fig:initial-credit}, and the example 
also shows that initial credit of $(n-1) \cdot W$ may be necessary.
To satisfy the parity condition, the play has to visit the initial state infinitely often, 
and to maintain the energy positive, the play has to visit the state with the positive-weighted self-loop.
Since the paths between these two state have weight $-(n-1) \cdot W$, it is easy to see that
initial credit $(n-1) \cdot W$ is necessary, and the self-loop has to be taken $M=2 \cdot (n-1) \cdot W$ times
requiring memory of size $M+1$.
\qed
\end{example}

We state the next lemma because it is useful in several proofs, though its argument is fairly easy.

\begin{lemma}\label{lem:energy-is-monotone}
Let $G$ be an energy parity game, and for each winning state $q$ let $v(q) \in \nat$ be the
minimum initial credit in $q$. 
For all outcomes $\rho$ of an optimal strategy $\straa$ in $G$ from a winning state $q_0$,
if the initial credit is $v(q_0) + \Delta$ for $\Delta \geq 0$, 
then the energy level at all positions of $\rho$ where a state $q$ occurs is at least $v(q) + \Delta$.
\end{lemma}

\begin{proof}
It is easy to see that for all outcomes $\rho$ of $\straa$ in $G$, the energy 
level at all positions of $\rho$ where $q$ occurs must be at least $v(q)$ (otherwise
if a $q$-position has energy level below $v(q)$, then player~$2$ can win from that
position, and therefore wins in the original game in contradiction with optimality of $\straa$). 
Hence, since strategies are functions of sequence of states only (and not of their energy 
level), if we start with energy level $v(q_0) + \Delta$, then the energy 
level at all positions of an outcome of $\straa$ is greater by $\Delta$ than if we had started 
with energy level $v(q_0)$. In particular, for all positions where $q$ occurs in
an outcome of~$\straa$, the energy level is at least $v(q) + \Delta$.
\qed
\end{proof}

We show that player~$2$ needs no memory at all in energy parity games. 
Note that energy objectives are not prefix-independent objectives
and the following lemma does not directly follow from the results 
of~\cite{Kop06}. 
However our proof, which is based on induction on edges, is an adaptation 
of the proof technique of~\cite{Kop06,GimbertZ05}. 
This result is useful to show that energy parity games are in coNP.

\begin{lemma}\label{lem:player-two-memoryless}
For all energy parity games $G$, memoryless strategies are sufficient for 
player~$2$ (i.e., the minimum initial credit for player~$1$ does not change
if player~$2$ is restricted to play memoryless).
\end{lemma}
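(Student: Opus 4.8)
The easy direction is that restricting player~$2$ can only help player~$1$: any player-$1$ strategy that wins from $q_0$ with credit $c$ against \emph{all} player-$2$ strategies wins a fortiori against the memoryless ones, so the minimum initial credit cannot increase under the restriction. Thus the content is the reverse inequality, and it suffices to prove the following: whenever player~$1$ cannot win from $q_0$ with credit $c$ (i.e. $c < v(q_0)$, where $v$ is the minimum-credit function of Lemma~\ref{lem:energy-is-monotone}, with $v(q)=\infty$ when player~$1$ loses), player~$2$ has a \emph{memoryless} strategy that spoils from $q_0$ against credit $c$. The first ingredient I would isolate is a one-step Bellman identity that holds at player-$2$ states precisely because a single adversarial move followed by a tail (prefix-independent) objective can be analysed by conditioning on that move: for $q \in Q_2$ one has $v(q) = \max\{\,v(q') - w(q,q') \mid (q,q') \in E\,\}$ (clamped into $\nat$), whereas at player-$1$ states only the inequality $v(q) \le v(q') - w(q,q')$ for every successor survives. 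The failure of the dual identity at player-$1$ states is exactly what forces memory for player~$1$, and the asymmetry is what I will exploit for player~$2$.

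I would then define the candidate memoryless strategy $\strab$ to pick, at each $q \in Q_2$, a \emph{value-tight} successor $q^\ast$ attaining the maximum, i.e. with $v(q^\ast) - w(q,q^\ast) = v(q)$; crucially this choice depends only on $q$, not on the current energy level. Against $\strab$, take any play $\rho = q_0 q_1 \dots$ with initial credit $c$, write $x_i = c + \EL(\rho(i))$ for its energy, and track the deficit $D_i = v(q_i) - x_i$. The tight choice gives $v(q_{i+1}) - w(q_i,q_{i+1}) = v(q_i)$ at player-$2$ moves, and the player-$1$ inequality gives $v(q_{i+1}) - w(q_i,q_{i+1}) \ge v(q_i)$ at player-$1$ moves, so $D_{i+1} \ge D_i$ at every step. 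Since $D_0 = v(q_0) - c \ge 1$, we obtain $x_i \le v(q_i) - 1$ at every position: the energy stays strictly below the value, so player~$1$ can never reach a configuration that is winning outright. Telescoping $v(q_{i+1}) - v(q_i) \ge w(q_i,q_{i+1})$ around any cycle of the player-$1$ graph $G_{\strab}$ yields $0 \ge \sum w$, i.e. \emph{every cycle of $G_{\strab}$ has weight $\le 0$}.

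It remains to handle the plays along which the energy happens to stay nonnegative, for which I must force the parity condition to fail. Here I would use that value-tight moves are exactly the \emph{optimal} player-$2$ moves (any non-tight move satisfies $v(q')-w < v(q)$, which lowers the deficit and only helps player~$1$); consequently, restricting player~$2$ to tight moves preserves the value, and player~$1$ still cannot win from $q_0$ with credit $c<v(q_0)$ in the restricted game $G_T$. Since all cycles of $G_T$ are nonpositive, any energy-feasible play is ultimately supported on weight-$0$ cycles (a negative recurrent cycle would drive the energy to $-\infty$), on which the telescoping inequality is tight and $v$ tracks the energy exactly. On the residual weight-$0$ region the objective reduces to a \emph{pure} parity condition on $\Inf(\rho)$, so by memoryless determinacy of parity games I would extract a single memoryless tie-breaking for player~$2$ among the value-tight successors that makes $\min p(\Inf(\rho))$ odd; merging it with the value-tight descent (which forces a strictly negative cycle, and hence an energy violation, everywhere else) produces one energy-independent memoryless strategy $\strab$.

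The main obstacle is precisely this \emph{de-energization}: justifying that player~$2$ may ignore the running energy level and commit to a choice depending only on the control state. Two points carry the difficulty. First, that an optimal player-$2$ strategy can be taken to use only value-tight moves (giving $v_{G_T}=v$) rests on the clean player-$2$ Bellman identity together with the monotonicity of Lemma~\ref{lem:energy-is-monotone}, neither of which has a player-$1$ counterpart. Second, one must argue that no reachable weight-$0$ cycle of even minimum priority can be enforced against the tie-breaking — otherwise the corresponding lasso, which has nonnegative energy by the tightness bookkeeping and good parity by assumption, would witness a win for player~$1$ from $q_0$ with credit below $v(q_0)$, contradicting minimality — and then to merge the parity-spoiling and energy-spoiling behaviours on their respective (disjoint) state regions into a genuinely memoryless strategy. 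Everything else is routine potential-function and cycle accounting.
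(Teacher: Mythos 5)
Your first two ingredients are essentially sound, and they constitute a genuinely different route from the paper's: the one-step Bellman identity at player-2 states does hold for energy parity objectives (parity is a tail condition and the energy constraint simply shifts by $w(q,q')$), the deficit $D_i$ is non-decreasing under value-tight player-2 moves and under arbitrary player-1 moves, and telescoping gives that every cycle of $G_{\strab}$ inside the finite-value region has nonpositive weight. For comparison, the paper proves the lemma by induction on the number of player-2 states (assuming WLOG two outgoing edges $e_l,e_r$ per player-2 state): it forms the two games $G_l,G_r$ obtained by deleting one edge at a chosen state $\hat{q}$, and uses Lemma~\ref{lem:energy-is-monotone} to interleave the two optimal player-1 strategies segment-by-segment, concluding that letting player~2 commit to the single ``better'' edge does not change the minimum credit; the memoryless strategy then comes from the induction hypothesis.

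The parity half of your argument, however, has a genuine gap, precisely at the point you flag as the main obstacle. First, the claimed contradiction is invalid: a good lasso consistent with your fixed memoryless $\strab$ shows only that $\strab$ fails to spoil; it does \emph{not} ``witness a win for player~1 from $q_0$ with credit below $v(q_0)$'', because a win in $G$ requires a player-1 strategy handling \emph{all} player-2 behaviors (other tight successors, non-tight successors), which a single play does not provide. Second, the supporting claim ``nonnegative energy by the tightness bookkeeping'' is false for the credits at issue: along a fully tight play the energy is exactly $v(q_i) - D_0$ with $D_0 = v(q_0)-c \geq 1$, which can well be negative; the lasso is energy-feasible only for credit at least $v(q_0)$, not below it. Third, the statement that player~2 \emph{wins} the pure parity game on the weight-$0$/tight region --- which is what ``extract a tie-breaking by memoryless determinacy'' presupposes --- is itself the thing needing proof: player~1 may perfectly well win the parity game on the tight subgraph from states where he wins the energy parity game with credit $v(q)$, and whether that parity-winning behavior is also energy-feasible depends on the running deficit, which a memoryless player-2 strategy cannot observe. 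This is a circularity: exhibiting a correct tie-breaking (and merging it with the energy-spoiling behavior, whose relevant regions overlap rather than being disjoint, and covering the $v=\infty$ region, about which your machinery says nothing) is essentially equivalent to the lemma itself. Some global argument, such as the paper's edge-elimination induction, is needed to supply this missing leverage.
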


\begin{proof}
Without loss of generality, we assume that every player-$2$ state has two outgoing
edges. The proof is by induction on the number of player-$2$ states. If $\abs{Q_2} = 0$,
then the result is trivial. Assume that the result holds for all energy parity games 
with $\abs{Q_2} < k$ and let $G$ be an energy parity games with $\abs{Q_2} = k$.

Consider some player-$2$ state $\hat{q}$ with outgoing edges 
$e_l = (\hat{q},q_l)$ and $e_r = (\hat{q},q_r)$. 
Let $G_l$ and $G_r$ be the game graphs obtained from~$G$ by
removing the edges $e_r$ and $e_l$ respectively. By the induction hypothesis,
memoryless strategies are sufficient for player~$2$ in~$G_l$ and~$G_r$.
For each $q \in Q$, let $v_l(q)$ and $v_r(q)$ be the minimal initial credit 
for player~$1$ from~$q$ in $G_l$ and $G_r$ respectively, and let $\straa_{l}$
and $\straa_{r}$ be corresponding optimal strategies for player~$1$. 
Assume without loss of generality that $v_l(\hat{q}) \geq v_r(\hat{q})$.

First, we show that for all $q \in Q$ the initial credit $v_l(q)$ in $q$ is sufficient 
to win in $G_r$, i.e., $v_l(q) \geq v_r(q)$  $(\star)$. 
To obtain this, we play in $G_r$ from~$q$ as would play an optimal strategy in $G_l$ and 
if we reach $\hat{q}$, then we play an optimal strategy starting from $\hat{q}$ in $G_r$. 
Consider an outcome~$\rho \in Q^\omega$ of this strategy. 
Either $\rho$~never visits $\hat{q}$ and then the initial credit $v_l(q)$ is clearly 
sufficient to win, or $\rho$~eventually visits $\hat{q}$ once and then the energy 
level is at least $v_l(\hat{q}) \geq v_r(\hat{q})$ by Lemma~\ref{lem:energy-is-monotone} 
in $G_l$ (since we played as in $G_l$ so far). 
Since from there on we play as in $G_r$, the energy level of $\rho$ never drops
below $0$, and the parity condition (in the whole play) is satisfied since 
it is satisifed in a suffix, and parity is a prefix-independent objective.

Second, we construct a strategy $\straa_{lr}$ for player~$1$ in $G$ that wins with initial credit
$\max\{v_l(q), v_r(q)\}$ from every $q \in Q$, establishing the result. 
Given a prefix $\tau \in Q^*Q_1$,
if $\hat{q}$ does not occur in $\tau$, then the strategy plays as in $G_l$, i.e., 
$\straa_{lr}(\tau) = \straa_{l}(\tau)$.
If $\hat{q}$ occurs in $\tau$, then we decompose $\tau$ into segments as follows: 
a finite prefix before the first visit to $\hat{q}$, 
then a (possibly empty) sequence of cycles over $\hat{q}$ (these cycles are not 
necessarily simple, but they do not contain nested cycles over $\hat{q}$), 
and then a (possibly empty) finite suffix after the last visit to $\hat{q}$. 
We label the cycles and the suffix with $l$ if $e_l$ was taken from $\hat{q}$, 
and with $r$ if $e_r$ was taken. 
If the last segment in $\tau$ is labeled by $d \in \{l,r\}$,
then the strategy for player~$1$ in~$G$ plays as the optimal strategy in $G_d$ applied
to the prefix~$\tau_d$ obtained from~$\tau$ by taking out the finite prefix and 
all segments not labeled by~$d$, i.e. $\straa_{lr}(\tau) = \straa_{d}(\tau_d)$. 

Now for all $q \in Q$, we show that $\straa_{lr}$ is winning in $G$ from $q$ with
initial credit $v(q) = \max\{v_l(q), v_r(q)\}$, i.e., we show that $v(q) = v_l(q)$ (by $(\star)$). 
Note that if $v_l(q) = \infty$ (or $v_r(q) = \infty$), 
then clearly $v(q) = \infty$ against player~$2$ playing as in $G_l$ (or $G_r$). 
So, we assume that $v_l(q)$ and $v_r(q)$ are finite.
Let $\rho$ be an outcome of the strategy $\straa_{lr}$. 
If~$\rho$ never visits~$\hat{q}$, then $\straa_{lr}$ has played as $\straa_{l}$ 
and the initial credit~$v_l(q)$ is sufficient to win. 
If~$\rho$ visits $\hat{q}$, then we decompose $\rho$ into segments as above 
(there may be no ``suffix'' if $\hat{q}$ is visited infinitely often) and we obtain
$\rho_d$ for $d \in \{l,r\}$ by removing from $\rho$ the prefix up to the first visit
to $\hat{q}$, and all segments not labeled by~$d$.
Note that the initial state is $\hat{q}$ in both $\rho_l$ and $\rho_r$.
Since the initial credit in $q$ is $v_l(q)$, we know that the energy level in
the first visit to $\hat{q}$ in $\rho$ is at least $v_l(\hat{q}) \geq v_r(\hat{q})$
(since $\straa_{lr}$ played as $\straa_{l}$ in $G_l$ so far).
By definition of $\straa_{lr}$, we also know that $\rho_l$ and $\rho_r$ are outcomes 
of optimal strategies in $G_l$ and $G_r$ respectively. 
Therefore the energy level in every position
of $\rho_l$ and $\rho_r$ where state $\hat{q}$ occurs is greater than the 
energy level in their initial position (using Lemma~\ref{lem:energy-is-monotone}). 
We say that the \emph{effect} of $\rho_l$ and $\rho_r$ on the energy level 
in $\hat{q}$ is nonnegative.

Therefore, if we consider the positions in $\rho$ where $\hat{q}$ occurs, 
if the position is in a $d$-labeled segment ($d \in \{l,r\}$), then the energy level 
is at least the energy level in the corresponding position in $\rho_d$ 
(because the effect on the energy level of the $\bar{d}$-labeled segments before 
that position is nonnegative - where $\bar{d} = l$ if $d = r$ and vice versa).
Therefore, the energy level in $\rho$ never drops below $0$. Moreover,
among $\rho_l$ and $\rho_r$, those that are infinite satisfy the parity condition,
so that that $\rho$ also satisfies the parity condition. Hence, $\rho$ satisfies
the energy parity condition.
\qed
\end{proof}

Finally, we give upper bounds on the memory and initial credit necessary for player~$1$
in energy parity games. The bounds are established using strategies of a special
form that alternate between \emph{good-for-energy} strategies and \emph{attractor}
strategies, defined as follows.

\smallskip\noindent{\bf Good-for-energy strategy.}
A strategy $\straa$ for player~$1$ is \emph{good-for-energy} in state $q$ if for all outcomes 
$\rho = q_0 q_1 \dots$ of $\straa$ such that $q_0 = q$,
for all cycles $C$ in the cycle decomposition of $\rho$, 
either $\EL(C) > 0$,
or $\EL(C) = 0$ and $C$ is even (i.e., $\min \{p(q) \mid q \in C \}$ is even).
A key result is to show the existence of good-for-energy strategies that are \emph{memoryless}.

\begin{lemma}\label{lem:good-for-energy}
Let $\Win$ be the set of winning states for player~1 in an energy parity game.
Then, there exists a memoryless strategy for player~1 which is good-for-energy
in every state $q \in ‌\Win$.
\end{lemma}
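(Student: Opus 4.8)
The plan is to reduce the existence of a memoryless good-for-energy strategy to the existence of a \emph{uniform} memoryless winning strategy in an ordinary energy game, played on the same graph $G$ but with a cleverly modified integer weight function $w'$ whose cycle-sums encode the parity condition as a tie-breaker on top of the real energy.

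First I would construct $w'$ so that the sign of the $w'$-energy of any simple cycle records exactly whether that cycle is good-for-energy. Assuming priorities lie in $\{0,\dots,d-1\}$, assign to each state $q$ a \emph{parity weight} $b(q)=(-1)^{p(q)}\cdot(n+1)^{d-1-p(q)}$, put it on the outgoing edges, and scale the real weights, setting $w'(q,q')=K\cdot w(q,q')+b(q)$ for a large integer $K>n\cdot(n+1)^{d-1}$. In a simple cycle each state occurs at most once, so the term of smallest priority dominates $\sum_{q\in\gamma}b(q)$ in magnitude and fixes its sign to $(-1)^{p^\ast}$, where $p^\ast$ is the minimum priority of $\gamma$; and since $K$ dwarfs the total parity contribution, the sign of $\EL(w',\gamma)$ agrees with that of $\EL(w,\gamma)$ whenever $\EL(w,\gamma)\neq 0$. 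A short magnitude estimate then shows $\EL(w',\gamma)$ is never $0$, and that $\EL(w',\gamma)>0$ iff either $\EL(w,\gamma)>0$, or $\EL(w,\gamma)=0$ and $\gamma$ is even --- that is, iff $\gamma$ is good. Consequently a memoryless player-$1$ strategy is good-for-energy on a set of states iff every cycle reachable from those states in its strategy graph is $w'$-nonnegative, which (since energy games admit memoryless strategies~\cite{CAHS03,BFLMS08}) is exactly the condition for the strategy to win the energy game $\tuple{G,w'}$.

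The core step is then to show that every energy-parity winning state lies in the player-$1$ winning region $W'$ of the energy game $\tuple{G,w'}$; granting $\Win\subseteq W'$, a single uniform memoryless strategy $\straa$ winning $\tuple{G,w'}$ from all of $W'$ has only $w'$-positive reachable cycles and is therefore good-for-energy on $\Win$. I would prove the inclusion by contradiction. If some $q\in\Win$ were not in $W'$, determinacy and memoryless determinacy of energy games give player~$2$ a memoryless spoiling strategy $\strab$ from $q$, and after fixing $\strab$ the residual one-player graph has no $w'$-nonnegative cycle reachable from $q$ (in a one-player graph player~$1$ wins the energy game iff some reachable cycle is nonnegative); hence every reachable cycle is $w'$-negative, i.e. \emph{not} good. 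Now play an energy-parity winning strategy of player~$1$ against $\strab$: the outcome $\rho$ satisfies both the parity condition and $\PosEnergy(c_0)$ for some finite credit $c_0$. Set $p^\ast=\min\{p(s)\mid s\in\Inf(\rho)\}$, which is even by the parity condition, and look at the return cycles of $\rho$ between consecutive occurrences of a fixed minimal-priority state, inside the tail where only states of $\Inf(\rho)$ appear. Each such cycle has minimum priority $p^\ast$ (even) and is not good, so it cannot be of the zero-energy-even kind and must satisfy $\EL(w,\cdot)<0$. Summing over the infinitely many such cycles drives the $w$-energy to $-\infty$, contradicting $\PosEnergy(c_0)$.

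The main obstacle is precisely this inclusion $\Win\subseteq W'$: one must convert the global, memory-dependent guarantee provided by the energy-parity objective into a statement about individual cycles of a single play. The device that makes it go through cleanly is that the minimality \emph{and} evenness of $p^\ast$ rule out the zero-energy-even case of ``not good'', forcing every tail return cycle to be strictly energy-decreasing and collapsing the whole argument into one contradiction with the energy constraint. The construction of $w'$ and the check that its cycle-sums never vanish are the remaining technical points, but they are routine magnitude computations.
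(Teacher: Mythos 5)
Your proposal is correct in substance, but it takes a genuinely different route from the paper's proof. The paper argues via a finite cycle-forming game: the good-for-energy condition depends only on the cycle formed (up to cyclic permutation), so by the memoryless determinacy result of~\cite{BjorklundSV04} player~1 has, at each winning state $q$, a memoryless strategy $\straa_q$ all of whose formed cycles are good (otherwise player~2's memoryless cycle-game strategy, fixed in the original game, would defeat every finite initial credit); uniformity is then obtained by an explicit gluing step, ordering the states and letting the uniform strategy at $q$ play as $\straa_{q_j}$ for the least $j$ such that $q$ is reachable under $\straa_{q_j}$. You instead reduce to an ordinary energy game under a perturbed weight function $w'$ --- essentially the construction the paper introduces only later, for algorithmic purposes, in Lemma~\ref{lem:good-for-energy-complexity} --- and get uniformity for free from memoryless determinacy of energy games~\cite{CAHS03,BFLMS08}. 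The real content of your proof is then the inclusion $\Win \subseteq W'$, which you establish by a contradiction argument on return cycles anchored at a recurrent state of minimal (even) priority; this is a correct and rather clean way of carrying out the step that the paper compresses into a single sentence of its own contradiction argument. What your route buys: it avoids the external cycle-forming-game determinacy theorem, and it delivers the reduction underlying the paper's algorithm as a by-product. What the paper's route buys: it is much shorter and needs no weight-perturbation bookkeeping.

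One step needs patching. The equivalence ``$\gamma$ is good iff $\EL(w',\gamma)>0$'' holds only for \emph{simple} cycles, whereas both the paper's definition of good-for-energy and your return cycles concern arbitrary cycles; for instance, a cycle composed of one simple cycle of $w$-energy $+1$ with odd minimal priority together with sufficiently many simple cycles of $w$-energy $0$ with odd minimal priority is good (its $w$-energy is positive) yet $w'$-negative. So your inference ``every reachable cycle is $w'$-negative, i.e.\ not good'' is not licensed by the cycle-sign correspondence alone. The conclusion is nevertheless true, and the fix is routine: decompose any cycle of the graph fixed by $\strab$ into simple cycles, each of which is $w'$-negative and hence not good; then observe that ``not good'' is closed under such composition (every component has $\EL(w,\cdot)\le 0$, so the total is $\le 0$, and if the total is $0$ then every component is a zero-energy odd cycle, whence the minimal priority of the composed cycle is a minimum of odd numbers, hence odd). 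The dual closure property of ``good'' is needed, for the same reason, at the point where you claim that a memoryless strategy winning the energy game $\tuple{G,w'}$ is good-for-energy, since outcomes of that strategy also contain non-simple cycles. With these two observations inserted, your proof is complete.
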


\begin{proof}
First, the definition of good-for-energy strategy in a state $q$ can be viewed as a winning strategy
in a finite cycle-forming game from $q$ where the game stops when a cycle $C$ is formed,
and the winner is determined by the sequence of states in $C$ (and is independent of cyclic
permutations). By the results of~\cite{BjorklundSV04}, both players have memoryless optimal
strategies in this finite cycle-forming game. 

Now, assume that player~$1$ wins an energy parity game from a state $q$. Towards contradiction, 
assume that player~$1$ has no good-for-energy strategy from $q$. Then, player~$2$ would have a 
memoryless winning strategy in the finite cycle-forming game. Fix this strategy in the original
energy parity game and then all cycles have either negative weight, or weight is zero and the 
least priority is odd. It follows that player~$1$ looses the energy parity game from $q$
(no matter the value of the initial credit), a contradiction. Hence, player~$1$ has a memoryless
good-for-energy strategy~$\sigma_q$ from~$q$.
Finally, to obtain a uniform good-for-energy strategy $\sigma_{\gfe}$, fix a (total) order on the states: $q_1 < q_2 < \cdots < q_n$, 
and let $R(q_i)$ be the set of all states occurring in the outcomes of $\sigma_{q_i}$.
Then $\sigma_{\gfe}(q_i) = \sigma_{q_j}(q_i)$ where $j = \min \{k \mid q_i \in R(q_k)\}$.
\qed
\end{proof}

\noindent{\bf Attractor.}
The player-$1$ \emph{attractor} of a given set $S \subseteq Q$ is the set of
states from which player~$1$ can force to reach a state in $S$. It is defined
as the limit $\Attr_1(S)$ of the sequence $A_0 = S$, $A_{i+1} = A_i \cup 
\{q \in Q_1 \mid \exists (q,q') \in E: q' \in A_i \} \cup 
\{q \in Q_2 \mid \forall (q,q') \in E: q' \in A_i \}$ for all $i \geq 0$.
The player-$2$ \emph{attractor} $\Attr_2(S)$ is defined symmetrically.
Note that for $i=1,2$, the subgraph of $G$ induced by $Q \setminus \Attr_i(S)$ 
is again a game graph (i.e., every state has an outgoing edge). It is well known
that attractors can be computed in polynomial time.

\begin{lemma}\label{lem:two-player-epg}
For all energy parity games $G$ with $n$ states and $d$ priorities,
if player~$1$ wins from a state $q_0$, then player~$1$ has a winning strategy from $q_0$ with memory of size 
$n \cdot d \cdot W$ and initial credit $(n-1) \cdot W$.
\end{lemma}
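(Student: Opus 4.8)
The plan is to restrict attention to the winning region $\Win$ (on which, by Lemma~\ref{lem:good-for-energy}, there is a single \emph{memoryless} good-for-energy strategy $\sigma_{\gfe}$), and then to build a winning strategy by induction on the number $d$ of priorities, where the strategy alternates between two ingredients: the good-for-energy strategy $\sigma_{\gfe}$, which guarantees that every cycle it closes is non-energy-decreasing (positive, or zero and even), and player-$1$ attractor strategies, which are memoryless and force progress towards states of a fixed priority. The memory of the resulting strategy will be a counter recording the current energy deficit (ranging over $O(n\cdot W)$ values, as in the one-player bound of Lemma~\ref{lem:one-player-epg}), together with a pointer recording the current depth of the recursion, which is bounded by the number $d$ of priorities; the product of these, with a constant factor for bookkeeping of the current phase, yields the bound $4 \cdot n \cdot d \cdot W$.

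For the induction, let $p^*$ be the least priority occurring in $\Win$. If $p^*$ is even, let $T = \{q \in \Win \mid p(q) = p^*\}$ and $A = \Attr_1(T)$ computed inside the subgraph induced by $\Win$. On $\Win \setminus A$ the induced subgame has strictly fewer priorities (it contains no priority-$p^*$ state), player~$1$ still wins the energy parity objective there, and the induction hypothesis supplies a strategy of the required form; inside $A$, player~$1$ uses the attractor strategy to reach $T$ and hence revisit the even priority $p^*$. If $p^*$ is odd, then visiting $p^*$ infinitely often is losing, so player~$1$ must eventually confine the play to a region avoiding priority $p^*$; the winning region of the subgame obtained by removing $\Attr_2$ of the priority-$p^*$ states again has fewer priorities, and the induction hypothesis applies, with a bounded attractor strategy used to steer the play into that region.

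The combined strategy proceeds in rounds. In each round player~$1$ first runs $\sigma_{\gfe}$ until the energy level has been restored to a fixed threshold of order $(n-1)\cdot W$, and only then performs one attractor step towards the target priority (or into the winning sub-region), which costs at most $(n-1)\cdot W$ units of energy along a simple path. That player~$1$ wins the energy game guarantees, via Lemma~\ref{lem:energy-is-monotone} together with the non-decreasing-cycle property of $\sigma_{\gfe}$, that this recovery is always possible without the energy dropping below zero, so that an initial credit of $(n-1)\cdot W$ suffices and the energy counter never needs to exceed $2 \cdot (n-1) \cdot W$. Memoryless spoiling strategies for player~$2$ (Lemma~\ref{lem:player-two-memoryless}) and the monotonicity lemma justify that these local energy bounds are the global ones.

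The main obstacle I expect is precisely the energy bookkeeping in the last paragraph: showing that the infinite alternation of recovery phases and attractor detours never makes the energy negative. The delicate point is that $\sigma_{\gfe}$ only guarantees non-decreasing cycles rather than strictly increasing ones, so a naive argument allows zero-weight cycles that cannot replenish the energy spent on detours; the resolution is that such zero cycles are \emph{even} and already witness the parity condition (at some even priority $\geq p^*$), so that no further detour is needed once the play settles into them. Making this dichotomy precise --- and verifying that the per-priority energy counters compose across the recursion to give exactly the $4 \cdot n \cdot d \cdot W$ bound rather than an exponential blow-up in $d$ --- is the crux of the argument.
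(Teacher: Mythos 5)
Your proposal is correct and takes essentially the same route as the paper's proof: induction on the number of priorities, the memoryless good-for-energy strategy of Lemma~\ref{lem:good-for-energy} on the winning region, attractor decomposition by the least priority (player-$1$ attractor of even-priority states, player-$2$ attractor of odd-priority states with recursive peeling of the remainder), and the same resolution of the zero-cycle obstacle (zero cycles closed by $\sigma_{\gfe}$ are even, so staying in the recovery phase forever still satisfies parity). The crux you flag is discharged in the paper by proving a strengthened induction hypothesis --- the constructed strategy, started with any credit $x \geq (n-1)\cdot W$, keeps the energy level at least $x - n\cdot W$ at all times --- which makes the recursion compose additively (memory $4 \cdot n \cdot (d-1) \cdot W$ for the subgame plus $3\cdot n\cdot W + 1$ for the energy-recovery counter, and energy drops per round summing to at most $2 \cdot n \cdot W$) rather than multiplicatively, so no blow-up in $d$ occurs.
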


\begin{proof}[of Lemma~\ref{lem:two-player-epg}]
We prove by induction a slightly stronger statement, namely that player~$1$ has 
a winning strategy with memory of size $n \cdot d \cdot W$, where
$n=\abs{Q}$, and such that all its outcomes with initial credit 
$x \geq (n-1) \cdot W$
have energy level always at least $x - (n-1) \cdot W$ 
(and this strategy is winning from every state where player~$1$ wins in $G$, 
thus including $q_0$).
 
For the case of $d=1$ priority, either the priority is odd and all states are loosing for player~$1$ (hence, the result holds trivially),
or the priority is even and the energy parity game reduces to an energy game which can be won by player~$1$ with a memoryless
strategy and initial credit $(n-1) \cdot W$ from every winning state 
(\cite{CAHS03,BFLMS08,DGR09}). 
By Lemma~\ref{lem:energy-is-monotone}, if the initial credit is $x \geq (n-1) \cdot W$,
then the same strategy ensures that the energy level is always at least 
$x - (n-1) \cdot W$.

By induction, assume that the statement holds for all energy parity games~$G$ with $d-1$ priorities.
Consider a winning state~$q_0$ in an energy parity game~$G$ with $d$ priorities. 
By Lemma~\ref{lem:good-for-energy}, player~$1$ has a memoryless strategy $\straa_{\gfe}$
which is good-for-energy from every winning state of $G$. We consider two cases.

\noindent{{\bf A.}} If the least priority in $G$ is even (say it is $0$). 
Let $\Win$ be the set of winning states for player~$1$ in $G$ (thus $q_0 \in \Win$),
and let $\Omega_0$ be the player-$1$ attractor of priority-$0$ states in the subgraph of $G$ induced 
by $\Win$. 
We construct
a winning strategy as follows (for clarity, we call the initial credit $x$ though
the strategy definition is independent of the value of $x$):

\begin{itemize}
\item[$(1)$] play $\straa_{\gfe}$ until the energy level has increased by 
$\Delta = (n-1) \cdot W$ (i.e., the energy level has reached $x + \Delta$) 
and proceed to $(2)$ with energy level $x' \geq x + \Delta$, 
or play $\straa_{\gfe}$ forever if the energy level never reaches $x + \Delta$;
\item[$(2)$] 
(a) if the current state of the game is not in $\Omega_0$, then play a winning strategy in the subgame
induced by $\Win \setminus \Omega_0$ (which has at most $d-1$ priorities) and such that 
the energy level never drops below $x' - (n-k-1) \cdot W$ where $k = \abs{\Omega_0}$ 
(such a strategy exists by the induction hypothesis);
(b) whenever the game reaches $\Omega_0$, then play a memoryless
strategy to reach a priority-$0$ state (this may decrease the energy level by $k \cdot W$), 
and proceed to $(1)$ with energy level at least $x' - (n-k-1) \cdot W - k \cdot W  = x' - (n-1) \cdot W \geq x$;
\end{itemize}

We show that this strategy is winning in $G$ from every state in $\Win$.
First, we show that the energy level never drops below $x - (n-1) \cdot W \geq 0$ if the initial
credit is $x \geq (n-1) \cdot W$ (and thus in particular never drops below $0$). 
In phase $(1)$, the energy level 
is always at least $x - (n-1) \cdot W \geq 0$ since $\straa_{\gfe}$ is memoryless
and good-for-energy. If the strategy switches to phase $(2)$, then we have
already seen that the energy never drops below $x' - (n-1) \cdot W \geq x$.
Therefore, whenever the strategy switches back to phase $(1)$, the energy level
has not decreased (i.e., it is at least $x$), and the argument can be repeated.
Second, we show that the parity condition is satisfied. We consider three possible
cases: $(i)$ if phases $(1)$ and $(2)$ are played infinitely often, then
priority $0$ is visited infinitely often and the parity condition is satisfied;
$(ii)$ if phase $(1)$ is played finitely often, then eventually phase $(2)$
is played forever, which means that we play a winning strategy in the subgame induced by 
$\Win \setminus \Omega_0$. Therefore, by induction hypothesis the parity condition is satisfied in the
game (since the parity objective is independent of finite prefixes);
$(iii)$ if phase $(2)$ is played finitely often, then eventually phase $(1)$
is played forever, which implies that eventually all visited cycles have weight $0$,
which entails that their least priority is even (by definition of good-for-energy
strategies), hence so is the least priority visited infinitely often.

Now, we analyze the amount of memory needed by this strategy. In this analysis,
we denote by $M(d,n)$ the size of the memory needed by our winning strategy 
in game $G$.
In phase $(1)$,
we need to remember the energy level variation, which is between $ -(n-1) \cdot W$
and $(n-1) \cdot W$, thus can be done with memory of size at most $(2n - 1) \cdot W$.
In phase $(2)$, the subgame strategy has memory size bounded by $M(d-1,n-k)$,
and the attractor strategy is memoryless. Hence, the size of the memory needed
is at most $M(d,n) \leq (2n-1) \cdot W + 1 + M(d-1,n-k)$.

\noindent{{\bf B.}} If the least priority in $G$ is odd (say it is $1$). 
Let $\Win$ be the set of winning states for player~$1$ in $G$ (thus $q_0 \in \Win$),
and let $\Omega_1$ be the player-$2$ attractor of priority-$1$ states in the subgraph of $G$ induced 
by $\Win$. By an argument similar to the proof of 
Lemma~\ref{lem:coBuchi-memoryless}, the set $\Win'$ of states in the subgame induced by $\Win \setminus \Omega_1$
that are winning (for energy parity objective) is nonempty, and player~$1$ is winning in the subgame
induced by $\Win \setminus \Attr_1(\Win')$. 
We construct a winning strategy on $\Attr_1(\Win')$ as follows (for clarity, we call the initial credit $x$ though
the strategy definition is independent of the value of $x$):
\begin{itemize}
\item[$(1)$] play a memoryless strategy to reach $\Win'$ (let $\Delta_1$ be the maximal energy cost), 
and proceed to $(2)$ with energy level $x' = x - \Delta_1$; 
\item[$(2)$] in $\Win'$, play a winning strategy in the subgame
induced by $\Win'$ (which has at most $d-1$ priorities) and such that 
the energy level never drops below $x' - \Delta_2$, where $\Delta_2 = \abs{\Win'} \cdot W$ 
(such a strategy exists by induction hypothesis);
note that $\Delta_1 + \Delta_2 \leq \abs{\Attr_1(\Win')} \cdot W$.
\end{itemize}

We apply the same construction recursively to the subgame induced by $\Win \setminus \Attr_1(\Win')$,
and call the corresponding energy drops $\Delta_3$, $\Delta_4$, etc.

We show that this strategy is winning in $G$ from every state in $\Win$.
First, the sum $\Delta_1 + \Delta_2 + \ldots$ of energy drops is bounded by $\abs{\Win} \cdot W$
and thus initial credit of $(n-1) \cdot W$ is enough.
Second, the parity condition is satisfied since we eventually play a winning strategy in a subgame
where priorities are greater than $1$, and without visiting priority $1$.
The amount of memory needed is the sum of the memory size of the strategies
in the sets $\Win'$ (of size $k_1$, $k_2$, etc.), and of the memoryless attractor strategy (of size $1$), 
hence at most 
$M(d,n) \leq M(d-1, k_1) + \cdots + M(d-1, k_m) + 1$ where $k_1 + \cdots + k_m < n$.

Combining the recurrence relations obtained in  {\bf A} and {\bf B} for $M(d,n)$, 
we verify that $M(d,n) \leq 2 \cdot n \cdot W + n \cdot (d-2) \cdot W \leq n \cdot d \cdot W$
when $d$ is even, and $M(d,n) \leq n \cdot (d-1) \cdot W$ when $d$ is odd.
\qed
\end{proof}

The following theorem summarizes the upper bounds on memory requirement in energy parity games.
Note that player~$1$ may need exponential memory as illustrated in Example~\ref{examp:memory}.

\begin{theorem}[Strategy Complexity] 
For all energy parity games, the following assertions hold:
(1) winning strategies with memory of size $n \cdot d \cdot W$ 
exist for player~1;
(2) memoryless winning strategies exist for player~2.
\end{theorem}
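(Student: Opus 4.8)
The plan is to read both assertions directly off the lemmas already established in this section, since this theorem is a summary statement that collects them; the substantive work has been done earlier.

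For assertion~(1), I would apply Lemma~\ref{lem:two-player-epg}: for any energy parity game with $n$ states and $d$ priorities, whenever player~1 wins from a state $q_0$, she has a winning strategy from $q_0$ using memory of size $4 \cdot n \cdot d \cdot W$ (and initial credit at most $(n-1)\cdot W$). Since this holds uniformly from every winning state, assertion~(1) follows at once. For assertion~(2), I would invoke Lemma~\ref{lem:player-two-memoryless}, which states that the minimum initial credit for player~1 is unchanged when player~2 is restricted to memoryless strategies. By determinacy, from every state that is not winning for player~1 there is a spoiling strategy for player~2, and the lemma guarantees that a memoryless one suffices; this is precisely assertion~(2).

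There is no real obstacle at the level of this theorem itself — all the difficulty resides in the two cited lemmas. The genuinely hard step is Lemma~\ref{lem:two-player-epg}, whose argument rests on the structural decomposition of winning strategies into alternating \emph{good-for-energy} phases (justified by the existence of \emph{memoryless} good-for-energy strategies from Lemma~\ref{lem:good-for-energy}) and \emph{attractor} phases; it is this alternation, controlled by the number of priorities $d$ and a range of relevant energy levels of order $n \cdot W$, that produces the bound $4 \cdot n \cdot d \cdot W$. The main conceptual point there is that one cannot win with memoryless strategies (Example~\ref{examp:memory} shows exponential memory is genuinely needed), so the good-for-energy/attractor alternation is what makes the finite bound possible at all.

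The subtlety hidden in assertion~(2), and hence in Lemma~\ref{lem:player-two-memoryless}, is that a single memoryless spoiling strategy for player~2 must defeat player~1 simultaneously for \emph{every} finite initial credit. One therefore cannot merely fix a credit value, pass to the induced finite parity game, and appeal to memoryless determinacy of parity games, because that would yield a spoiling strategy tailored to one credit. Handling all credits at once — by reasoning about the maximal energy level reachable in each player-2 state under an optimal spoiling strategy and making the choice there credit-independent — is what makes assertion~(2) more than a routine corollary of parity-game determinacy.
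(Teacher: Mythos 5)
Your proposal is correct and matches the paper exactly: this theorem is a summary statement, and the paper likewise obtains assertion~(1) directly from Lemma~\ref{lem:two-player-epg} and assertion~(2) from Lemma~\ref{lem:player-two-memoryless} together with determinacy, with no further argument. (Your aside speculating that Lemma~\ref{lem:player-two-memoryless} is proved by analyzing maximal reachable energy levels in player-2 states differs from the paper's actual proof, which proceeds by induction on the number of player-2 states via edge removal, but this does not affect the correctness of your proof of the theorem itself.)
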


\section{Computational Complexity of Energy Parity Games}

We show that the initial credit problem for energy parity games is in NP $\cap$ coNP. 
The coNP upper bound follows from Lemma~\ref{lem:one-player-epg-P} showing that in player-1 games,
the initial credit problem for energy parity objecitves 
can be solved in polynomial time, and from the fact that memoryless strategies are
sufficient for player~$2$ (Lemma~\ref{lem:player-two-memoryless}). The NP upper bound may be surprising since 
exponential memory may be necessary for player~$1$ to win. 
However, we show that winning strategies with a special structure (that alternate between 
good-for-energy strategies and attractor strategies) can be constructed and 
this entails the NP upper bound. The details are presented after Lemma~\ref{lem:one-player-epg-P}.

\begin{lemma}\label{lem:one-player-epg-P}
The problem of deciding, given a player-$1$ energy parity game~$G$ and an initial
state~$q$, if there exists a finite initial credit such that player~$1$ wins in~$G$
from~$q$ can be solved in polynomial time.
\end{lemma}

\begin{proof}
By the analysis in the proof of Lemma~\ref{lem:one-player-epg},
a polynomial-time algorithm for the initial credit problem in a 
player-$1$ energy parity game~$G$ is as follows: for each even priority $2i$,
consider the decomposition in maximal strongly connected components (scc)
of the restriction of~$G$ to the states with priority at least $2i$. 
The algorithm checks for every scc~$S$ whether $(a)$ $S$ contains a state
with priority $2i$ and a (strictly) positive cycle (using a shortest-path algorithm),
or $(b)$ $S$ contains a cycle of energy level~$0$ through a state with priority $2i$
(again using a shortest-path algorithm). The algorithm returns {\sc Yes} if
for some priority $2i$ and some scc $S$, condition $(a)$ or $(b)$ is satisfied.
The correctness of this algorithm follows from the analysis in the proof
of Lemma~\ref{lem:one-player-epg}, in particular if $(a)$ holds, then a finite
initial credit is sufficient to reach the positive cycle, and repeating this cycles
gives enough energy to visit priority $2i$ in the scc and get back to the cycle,
and if $(b)$ holds, then reaching the energy-$0$ cycle requires finite initial credit, 
and looping through it forever is winning.
Since there are at most $\abs{Q}$ priorities, and at most $\abs{Q}$ scc's,
and since algorithms for scc decomposition and shortest path problem run in 
polynomial time, the desired result follows.
\qed
\end{proof}

\begin{lemma}\label{lem:good-for-credit-NP}
Let $G$ be an energy parity game.
The problem of deciding, given a state $q_0$ and a memoryless strategy $\straa$,
whether $\straa$ is good-for-energy in $q_0$, can be solved in polynomial time.
\end{lemma}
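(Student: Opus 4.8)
The plan is to reduce the test to a handful of reachability, negative-cycle, and shortest-path computations in the finite graph $G_{\straa}$ obtained by fixing the memoryless strategy $\straa$. The first observation is that, because $\straa$ is memoryless, every outcome of $\straa$ from $q_0$ is a walk that stays inside the subgraph $H$ of $G_{\straa}$ induced by the states reachable from $q_0$ (player-1 states have the single edge prescribed by $\straa$, player-2 states keep all their edges), and conversely every closed walk in $H$ can be realized as a repeated sub-walk of some outcome, since player~2 is free to traverse it. Hence the ``cycles'' occurring in outcomes are exactly the closed walks of $H$, and $\straa$ is good-for-energy in $q_0$ if and only if $H$ contains no \emph{bad} closed walk, where a walk $\gamma$ is bad when $\EL(\gamma) < 0$, or $\EL(\gamma) = 0$ and $\min\{p(q) \mid q \in \gamma\}$ is odd. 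Computing $H$ is a linear-time reachability step.

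I would then split the bad-walk test into two polynomial checks. The first is negative-cycle detection in $H$ by Bellman--Ford: if $H$ has a reachable negative cycle then $\straa$ is not good-for-energy, and otherwise all cycles of $H$ are nonnegative, a fact used below. The second looks for a zero-weight closed walk of odd least priority. Here I iterate over each odd priority value $\ell$ appearing in $H$ and restrict to the induced subgraph $H_\ell$ on the states of priority $\geq \ell$; a closed walk living in $H_\ell$ and touching a state of priority exactly $\ell$ has least priority precisely $\ell$. For each state $v$ with $p(v)=\ell$ I compute inside $H_\ell$ (which inherits from $H$ the absence of negative cycles) the minimum weight of a closed walk through $v$ as $\min_{(u,v)\in E_{H_\ell}} \bigl( d(v,u) + w(u,v) \bigr)$, where $d(v,\cdot)$ is the single-source shortest-path function from $v$. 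Because $H_\ell$ has no negative cycle this value is nonnegative, and it equals $0$ exactly when some zero-weight closed walk runs through $v$; such a walk is bad (weight $0$, least priority $\ell$ odd). I declare $\straa$ not good-for-energy precisely when one of these $O(n^2)$ shortest-path computations returns $0$, and good-for-energy otherwise.

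Correctness is argued against the closed-walk reformulation. Soundness is immediate: any walk produced by either check is a reachable bad cycle, which refutes good-for-energy. For completeness, suppose a reachable bad closed walk $\gamma$ exists: either $\EL(\gamma)<0$, in which case $\gamma$ decomposes into simple cycles at least one of which is negative and is caught by Bellman--Ford; or $\EL(\gamma)=0$ with least priority $\ell$ odd, in which case fixing a state $v\in\gamma$ with $p(v)=\ell$ shows $\gamma\subseteq H_\ell$, so the minimum closed-walk weight through $v$ in $H_\ell$ is $\leq 0$ and hence $=0$ by nonnegativity, triggering the second check. The one point needing a clean argument is the passage between arbitrary closed walks and simple cycles: since $H$ has no negative cycle, a zero-weight closed walk decomposes into zero-weight simple cycles and its least priority is the minimum of theirs, which is what lets the shortest-path quantity stand in for an (exponential) enumeration of cycles and justifies the priority restriction. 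I expect this closed-walk/simple-cycle correspondence together with getting the priority-restriction bookkeeping exactly right to be the only real subtlety; each ingredient runs in polynomial time and is invoked $O(n^2)$ times, so the overall bound is polynomial.
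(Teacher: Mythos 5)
Your proposal is correct and follows essentially the same route as the paper's proof: restrict $G_{\straa}$ to the states reachable from $q_0$, use shortest-path/negative-cycle computations to verify that all cycles are nonnegative, and then, for each odd-priority state, check in the subgraph induced by states of priority at least that value that no zero-weight cycle passes through it. The only difference is that you spell out the closed-walk versus simple-cycle correspondence and the realizability of walks as cycles of outcomes, details the paper leaves implicit.
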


\begin{proof}
Consider the restriction $\widehat{G}$ of the graph $G_{\straa}$ to the 
states reachable from~$q_0$ under strategy~$\straa$. First, for each state $q$ in this graph,
an algorithm for the shortest path problem can be used
to check in polynomial time that every cycle through $q$ has nonnegative
sum of weights. Second, for each state $q$ with odd priority $p(q)$,
the same algorithm checks that every cycle through $q$ in the
restriction of $\widehat{G}$ to the states with priority at least $p(q)$
has (strictly) positive sum of weights.
\qed
\end{proof}

We first establish the NP membership of the initial credit problem for the case
of energy parity games with two priorities. For energy coB\"uchi games, the result
is obtained by showing that memoryless strategies are sufficient, and for 
energy B\"uchi games, the proof gives a good flavor of the argument in the
general case.

\begin{lemma}\label{lem:coBuchi-memoryless}
Memoryless strategies are sufficient for player~$1$ to win energy coB\"uchi games
(i.e., the minimum initial credit for player~$1$ does not change
if player~$1$ is restricted to play memoryless).
\end{lemma}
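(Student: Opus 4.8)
The plan is to exploit the very special shape of the coB\"uchi condition. With priorities in $\{1,2\}$, a play satisfies the parity condition if and only if it visits the set $B=\{q\in Q\mid p(q)=1\}$ of odd states only finitely often, i.e. it eventually remains forever among the priority-$2$ states. Hence a winning play must eventually stay in a region of priority-$2$ states in which the energy can be kept nonnegative, and the whole difficulty is to reach such a region \emph{memorylessly} while never letting the energy go negative. First I would isolate this final region: let $U^\ast = Q\setminus\Attr_2(B)$ be the set of states from which player~$1$ can force the play to avoid $B$ (equivalently, to stay in priority-$2$ states) forever; since $U^\ast$ is the complement of a player-$2$ attractor it induces a subgame and contains only priority-$2$ states. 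On the subgame induced by $U^\ast$ the objective degenerates to a plain energy game, for which memoryless strategies suffice and the minimum credit is at most $(n-1)\cdot W$~\cite{CAHS03,BFLMS08}. Let $S\subseteq U^\ast$ be its winning region and $\sigma_S$ a memoryless optimal strategy there. Because $U^\ast$ is a trap for player~$2$ and $\sigma_S$ keeps the play inside $S$, the set $S$ is a trap of the full game, every outcome of $\sigma_S$ from $S$ stays in priority-$2$ states with nonnegative energy, and therefore $S\subseteq\Win$.

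Next I would peel off $S$ by an attractor and induct on $\abs{Q}$. Put $A=\Attr_1(S)$ and $T=Q\setminus A$, which again induces a (smaller) subgame since $A$ is a player-$1$ attractor. The structural claim is $\Win = A\cup\Win^{T}$, where $\Win^{T}$ is the energy-coB\"uchi winning region of the subgame $T$. The inclusion $A\cup\Win^{T}\subseteq\Win$ is easy: from $A$ player~$1$ reaches $S$ in at most $\abs{A}$ steps (finitely many visits to $B$, so coB\"uchi is met) and then plays $\sigma_S$ forever; and from a state of $\Win^{T}$ player~$1$ plays her winning $T$-strategy as long as player~$2$ stays in $T$, while any move of player~$2$ out of $T$ lands in $A=\Attr_1(S)$, whence player~$1$ reaches $S$ and stays (here $S$ being a trap is crucial: once in $S$ the play never leaves). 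For the converse, if $q\in\Win$ but $q\notin A$, then $q\in T$ and player~$2$ can keep the play inside the trap $T$; against such play the full-game winning strategy still wins energy-coB\"uchi, so $q\in\Win^{T}$. Gluing the memoryless attractor strategy on $A\setminus S$, the strategy $\sigma_S$ on $S$, and (by the induction hypothesis) the memoryless winning strategy on $T$ yields a single memoryless strategy, because the three regions partition the states; along any outcome the play only ever moves to an ``outer'' region closer to a genuine safe set, hence it performs finitely many region changes, visits $B$ finitely often, and ultimately settles in a copy of $S$.

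The step I would treat most carefully, and which makes the induction terminate, is \emph{progress}: whenever $\Win\neq\emptyset$ one has $S\neq\emptyset$ (so $A\neq\emptyset$ and $\abs{T}<\abs{Q}$). The subtle point is that the outermost $S$ could a priori be empty while the recurrent part of a winning play lives in priority-$2$ states of $\Attr_2(B)$. To rule this out I would look at the recurrent behaviour of a winning strategy: a winning play must eventually cycle forever inside a set of priority-$2$ states that player~$2$ cannot escape and along which every cycle is nonnegative (otherwise the energy diverges to $-\infty$); such a set is a priority-$2$ trap with nonnegative cycles, hence contained in $U^\ast$ and winning for the energy game on $U^\ast$, i.e. contained in $S$. (This is cleanest via the finite energy-expanded game $G^{M}$ with $M=2\cdot(n-1)\cdot W$: a bottom strongly connected component of $G^{M}$ under a memoryless coB\"uchi-winning strategy projects exactly to such a trap.) The base case $S=\emptyset$ then forces $\Win=\emptyset$.

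Finally, to obtain the stronger statement that the minimum initial credit is unchanged, I would make the components energy-optimal and invoke monotonicity (Lemma~\ref{lem:energy-is-monotone}): choose $\sigma_S$ energy-optimal on $S$, route the attractor moves along energy-optimal paths (using the good-for-energy strategy of Lemma~\ref{lem:good-for-energy} on the transient part), so that the resulting memoryless strategy maintains, at every occurrence of a state $q$, energy at least the minimum credit $v(q)$. Since any memoryless strategy is in particular a strategy, the memoryless minimum credit cannot be smaller than the general one, and the construction shows it is no larger, giving equality. The main obstacle I anticipate is exactly this last coordination: forcing a \emph{single} memoryless strategy to simultaneously (i) realise the coB\"uchi condition through the layered reach-then-stay structure and (ii) meet the exact optimal energy value $v(q)$ at every state; the argument that the finitely many layer changes waste no energy relative to $v$ is where Lemma~\ref{lem:energy-is-monotone} does the real work.
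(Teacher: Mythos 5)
Your decomposition is essentially the paper's own: both proofs isolate the priority-$2$ region outside the player-$2$ attractor of the odd states, solve a plain energy game there, take the player-$1$ attractor of its winning set, and induct on the complement (the paper works inside $\Win$ and you work in the full game, which is immaterial); your trap arguments, the gluing of the three memoryless pieces, and the induction are all sound. The one place where you genuinely diverge from the paper is the \emph{progress} step --- that $\Win\neq\emptyset$ forces $S\neq\emptyset$ --- and that is exactly where your proof has a gap. As written, you argue from a single winning play: ``a winning play must eventually cycle forever inside a set of priority-$2$ states that player~$2$ cannot escape.'' A play does not yield a trap: its set of recurrent states is closed only under the choices actually made along that play, not under all of player~$2$'s alternatives, so no player-$2$ trap (and hence no inclusion into $U^\ast$) follows. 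Your proposed repair via the energy-expanded game $G^{M}$ with $M=2(n-1)W$ silently presupposes the hard direction of the equivalence between $G$ and $G^{M}$, namely that a player who wins the energy coB\"uchi game never needs to accumulate energy above the cap $M$; this is not established anywhere in the paper and is of essentially the same depth as the finite-memory theorem, so it cannot be invoked as a known fact.

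The step can be repaired non-circularly by citing Lemma~\ref{lem:two-player-epg} (finite-memory winning strategies exist; it is proved before Lemma~\ref{lem:coBuchi-memoryless} and independently of it): fix a finite-memory winning strategy $\sigma$, take a bottom strongly connected component $C$ of the product graph $G_{\sigma}$ reachable from a winning state; all states of $C$ project to priority~$2$ (otherwise player~$2$ realizes an outcome visiting priority~$1$ infinitely often) and every cycle of $C$ is nonnegative (otherwise player~$2$ pumps a negative cycle), so the projection of $C$ is a player-$2$ trap contained in $U^\ast$ on which player~$1$ wins the energy game, giving $S\neq\emptyset$. Note, however, that the paper gets progress with a much lighter, self-contained contradiction: if player~$2$ won the energy game everywhere on the priority-$2$ region, he could combine that memoryless spoiling strategy (which defeats \emph{every} initial credit) with the attractor strategy to the priority-$1$ states; every outcome then either visits priority~$1$ infinitely often or eventually stays in the priority-$2$ region with unbounded energy drop, contradicting $\Win\neq\emptyset$. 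This needs nothing beyond memoryless determinacy of plain energy games, and I would recommend adopting it. Finally, your last paragraph on preserving the exact minimum credit is too vague to check (``route the attractor moves along energy-optimal paths''), but this is a minor point: the paper's own proof likewise only establishes equality of winning regions together with the credit bound $(\abs{\Win}-1)\cdot W$, which is all that is used downstream.
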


\begin{proof}
Let $\tuple{G,p,w}$ be an energy coB\"uchi games with $G=\tuple{Q,E}$ (thus $p: Q \to \{1,2\}$). 
Let $\Win \subseteq Q$ be the set of states from which player~$1$ wins in $G$.
Note that the subgraph of $G$ induced by $\Win$ is a game graph.
The proof is by induction on the size of $\Win$. For $\abs{\Win} = 1$,
the result of the lemma is trivial: player~$1$ wins with 
memoryless strategy and initial credit $0$. By induction hypothesis, assume that
for $\abs{\Win} < k$, player~$1$ wins from every state in $\Win$ with memoryless strategy 
and initial credit $(\abs{\Win}-1) \cdot W$. Let $\abs{\Win} = k$.

Let $\Omega_1 \subseteq Q$ be the player-$2$ attractor of priority-$1$ states
(i.e., $\Omega_1$ is the set of states from which player~$2$ can force to reach 
a state with priority $1$).
Consider the subgraph $G'$ of $G$ induced by $Q' = \Win \setminus \Omega_1$. 
We claim that player~$1$ has a (memoryless) winning strategy in the energy game $\tuple{G',w}$
from some state $q'$. We show this by contradiction. Assume that player~$2$ has a (memoryless) spoiling 
strategy $\strab$ on $\Win \setminus \Omega_1$ in the energy game $\tuple{G',w}$, 
and consider the (memoryless) extension of $\strab$ to $\Omega_1$ that enforces to reach priority-$1$ states. 
Let $\rho$ be an outcome of this 
strategy. Either $\rho$ visits $\Omega_1$ (and also priority-$1$ states) infinitely often and 
thus violates the coB\"uchi condition, or $\rho$ eventually stays in $\Win \setminus \Omega_1$
and violates the energy condition. This contradicts that player~$1$ wins in $G$ from $\Win$.
Hence, the set of winning states for player~$1$ in the energy game $\tuple{G',w}$
is nonempty, and player~$1$ is winning with a memoryless strategy (by properties of energy games). 
Note that since all states in $G'$ have priority $2$, this memoryless strategy is also winning for the 
energy coB\"uchi condition. Let $\Win'$ be the player-$1$ attractor of this winning set.
Player~$1$ has a memoryless winning strategy $\straa_e$ from all states in $\Win'$,
and properties of energy games show that an initial credit of $(\abs{\Win'}-1) \cdot W$ is sufficient. 

Now, consider the subgraph $G''$ of $G$ induced by $Q'' = \Win \setminus \Win'$. 
It is easy to see that player~$1$ wins everywhere in the energy coB\"uchi game $\tuple{G'',p,w}$. 
Therefore by induction hypothesis, initial credit $(\abs{Q''}-1) \cdot W$ is sufficient. 
Since $\abs{\Win'} + \abs{Q''} = \abs{\Win}$, player~$1$ can start in any state of $\Win$ with initial credit $(\abs{\Win}-1) \cdot W$
and while the game stays in $G''$, guarantee that the energy level is always at least
$(\abs{\Win'}-1) \cdot W$, so that whenever the game leaves $Q''$, player~$1$ has enough credit to 
use the memoryless winning strategy $\straa_e$ on $\Win'$.
\qed
\end{proof}

\begin{lemma}\label{lem:buchi-cobuchi-np}
The problem of deciding, given a state $q$ in an energy B\"uchi (resp. coB\"uchi) 
game~$G$, if there exists a finite initial credit such that player~$1$ wins in~$G$
from~$q$ is in NP.
\end{lemma}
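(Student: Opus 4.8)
The plan is to handle the two cases separately, guessing memoryless certificates supplied by the preceding lemmas and reducing every verification to a polynomial-time cycle or reachability check.

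For energy coB\"uchi games, memoryless strategies suffice by Lemma~\ref{lem:coBuchi-memoryless}, so the NP algorithm guesses a memoryless player-$1$ strategy $\straa$ and inspects the graph $G_{\straa}$ restricted to the states reachable from $q$. Fixing $\straa$ turns the game into a player-$2$ graph, and $\straa$ wins from $q$ with some finite credit iff every player-$2$ play satisfies both objectives; I would characterize this by two polynomial-time conditions on that reachable subgraph: (i) no reachable cycle is negative, which (by the energy-game theory) guarantees that a finite initial credit, at most $(n-1)\cdot W$, keeps the energy nonnegative and is detected by Bellman--Ford-style negative-cycle search; and (ii) no reachable cycle contains a priority-$1$ state, which is exactly the coB\"uchi condition, since the states visited infinitely often by any play lie on a reachable cycle and player~$2$ can traverse any such cycle repeatedly. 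Both checks run in polynomial time, so the coB\"uchi problem is in NP.

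For energy B\"uchi games the subtlety is that winning strategies may require (exponential) memory, so no single memoryless strategy can serve as the certificate. Instead the plan is to guess a memoryless good-for-energy strategy $\straa_{\gfe}$, whose existence on the winning region is guaranteed by Lemma~\ref{lem:good-for-energy} and whose good-for-energy property is verifiable in polynomial time by Lemma~\ref{lem:good-for-credit-NP}, together with the candidate winning region $\Win \ni q$. I would then verify in polynomial time that the subgraph induced by $\Win$ is a valid game graph and that player~$1$ wins the (purely qualitative) B\"uchi game on it from $q$, i.e.\ can force the play to $\Win \cap p^{-1}(0)$ infinitely often while staying inside $\Win$; this is a plain attractor/B\"uchi computation. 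The witnessed strategy is the alternating strategy anticipated before Lemma~\ref{lem:two-player-epg}: repeatedly attract to a priority-$0$ state (at most $n$ steps, energy cost at most $(n-1)\cdot W$) and then play $\straa_{\gfe}$.

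The soundness of this certificate --- that the alternating strategy is winning for some finite initial credit --- is the step I expect to be the main obstacle, because the energy must be bounded from below despite infinitely many attractor excursions, each possibly costing $(n-1)\cdot W$. Here I would exploit the good-for-energy property: under $\straa_{\gfe}$ every cycle is nonnegative and every zero-weight cycle is even, hence (priorities being in $\{0,1\}$) passes through a priority-$0$ state. Thus between two excursions, playing $\straa_{\gfe}$ either already visits priority-$0$ states infinitely often through zero cycles, in which case no further excursion is needed and the energy never decreases, or it meets a strictly positive cycle that can be pumped to rebuild a reserve of $(n-1)\cdot W$ before the next excursion; careful bookkeeping of this reserve keeps the energy bounded below while ensuring priority-$0$ states recur. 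For completeness I would argue, using the finite parity game $G^M$ with $M = 2\cdot(n-1)\cdot W$ and the memoryless good-for-energy strategy of Lemma~\ref{lem:good-for-energy}, that whenever player~$1$ wins there do exist such a region $\Win$ and strategy $\straa_{\gfe}$, so the guessed certificate exists and has polynomial size.
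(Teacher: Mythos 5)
Your coB\"uchi case is correct and essentially the paper's: memoryless strategies suffice by Lemma~\ref{lem:coBuchi-memoryless}, and the guessed strategy is verified by checking that all reachable cycles in $G_{\straa}$ are nonnegative and avoid priority-$1$ states, which is polynomial.

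The B\"uchi case, however, has a genuine gap: your certificate is unsound. You only verify that player~$1$ wins the \emph{qualitative} B\"uchi game on $G_{\Win}$ \emph{from $q$}, whereas the paper's certificate additionally guesses a memoryless strategy $\straa_b$ that forces a visit to a priority-$0$ state \emph{from every state of $\Win$} (equivalently, that $\Attr_1(\Win \cap p^{-1}(0))$ computed inside $G_{\Win}$ equals $\Win$). The weaker condition does not suffice, because while playing $\straa_{\gfe}$ the play may drift to states of $\Win$ from which no excursion to a priority-$0$ state can be forced; the good-for-energy property does not prevent this, since strictly positive cycles may be odd. Concretely, take $Q = \{q,a\}$ with $q$ a player-$1$ state of priority~$0$, $a$ a player-$2$ state of priority~$1$, and edges $q \to q$ of weight $-1$, $q \to a$ of weight $0$, $a \to q$ of weight $0$, $a \to a$ of weight $+1$. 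With $\Win = \{q,a\}$ and $\straa_{\gfe}(q)=a$, every cycle in an outcome of $\straa_{\gfe}$ is either strictly positive or has weight $0$ and contains $q$, so $\straa_{\gfe}$ is good-for-energy in every state of $\Win$ and passes the check of Lemma~\ref{lem:good-for-credit-NP}; moreover player~$1$ wins the qualitative B\"uchi game from $q$ by looping at $q$ forever. So your certificate accepts. Yet player~$1$ loses the energy B\"uchi game from $q$: against the player-$2$ strategy that loops at $a$ forever, staying at $q$ exhausts any finite credit, while moving to $a$ yields a play with $\Inf(\rho) = \{a\}$, violating the B\"uchi condition. The fix is exactly the paper's stronger check (the attractor of priority-$0$ states covers all of $\Win$), which is still polynomial-time verifiable; with it, your alternating strategy and its analysis (including the key observation that if the good-for-energy phase lasts forever, then eventually all cycles formed have weight zero and hence are even) go through as in the paper. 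A smaller related slip: your appeal to ``a strictly positive cycle that can be pumped'' is not available to player~$1$ in a two-player game, since cycle formation also depends on player~$2$; the paper avoids this by switching phases on an energy threshold rather than by pumping designated cycles.
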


\begin{proof}
By Lemma~\ref{lem:coBuchi-memoryless}, an NP-algorithm for energy coB\"uchi games $\tuple{G,p,w}$
guesses a memoryless strategy $\straa$ and checks in polynomial time that $\straa$ is winning
for both the energy game $\tuple{G,w}$ and the coB\"uchi game $\tuple{G,p}$. This ensures that
all cycles in $G_\straa$ are positive (for energy) and visit only priority-$2$ states,
and thus $\straa$ is winning in the energy coB\"uchi game.

For energy B\"uchi games, let $\Win$ be the set of winning states for player~$1$ in $\tuple{G,p,w}$,
and let $G_{\Win}$ be the subgraph of $G$ induced by $\Win$.
Clearly there exists a memoryless strategy $\straa_b$ in $G_{\Win}$ that enforces a visit to a priority-$0$ state
from every state in $\Win$, and there exists a memoryless good-for-energy strategy $\straa_{\gfe}$ in $G_{\Win}$ 
(by Lemma~\ref{lem:good-for-energy}). 
We show that the converse holds: if such strategies $\straa_b$ and $\straa_{\gfe}$ exist, then 
player~$1$ wins in the energy B\"uchi game $\tuple{G,p,w}$. 
Let $n=\abs{Q}$ be the number of states.
To prove this,
we give an informal description of a winning strategy for player~$1$ (with initial credit $(n-1) \cdot W$) as follows:
(1) play strategy $\straa_{\gfe}$ as long as the energy level is below $2 \cdot (n-1) \cdot W$;
(2) if the energy level gets higher than $2 \cdot (n-1) \cdot W$, then play $\straa_b$
until a priority-$0$ state is visited (thus $\straa_b$ is played during at most $n-1$ steps), and proceed to step (1)
with energy level at least $(n-1) \cdot W$.

Let $\rho$ be an outcome of this strategy with initial credit 
$(n-1) \cdot W$. 
First, we show that the energy level is nonnegative in every position of $\rho$. 
By definition of good-for-energy strategies, in every cycle of $G_{\straa_{\gfe}}$ the sum of the 
weights is nonnegative. Therefore in the prefixes of $\rho$ corresponding to part $(1)$
of the strategy, the energy level is always nonnegative. Whenever, part $(2)$
of the strategy is played, the energy level is at least $2 \cdot (n-1) \cdot W$ and thus
after (at most) $n-1$ steps of playing $\straa_b$, the energy level is still at 
least $(n-1) \cdot W$, and the argument can be repeated. 
Second, we show that priority-$0$ states are visited infinitely often in $\rho$.
This is obvious if part $(2)$ of the strategy is played infinitely often;
otherwise, from some point in $\rho$, part $(1)$ of the strategy is played
forever which implies that in the cycle decomposition of $\rho$, ultimately
all cycles have sum of weights equal to zero. By definition of good-for-energy strategies, 
every such cycle is even, i.e., visits a priority-$0$ state.

Therefore, an NP-algorithm for energy B\"uchi games guesses the set $\Win \subseteq Q$
and the memoryless strategies $\straa_b$ and $\straa_{\gfe}$ on $\Win$, and checks in polynomial time
using standard graph algorithms
that $\straa_b$ enforces a visit to a priority-$0$ state in $\Win$, that $\straa_{\gfe}$
is good-for-energy (see Lemma~\ref{lem:good-for-credit-NP}), and that $q \in \Win$.
\qed
\end{proof}

\begin{lemma}\label{lem:en-parity-NP}
The problem of deciding, given a state~$q$ in an energy parity game~$G$, if there exists a finite initial credit
such that player~$1$ wins in $G$ from~$q$ is in NP.
\end{lemma}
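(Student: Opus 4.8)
The plan is to proceed by induction on the number of priorities $d$, generalizing the argument given for energy B\"uchi games in Lemma~\ref{lem:buchi-cobuchi-np}. The base case $d \le 2$ is precisely Lemma~\ref{lem:buchi-cobuchi-np} (energy B\"uchi and coB\"uchi games). For the inductive step I would normalize the priorities to $\{0,1,\dots,d-1\}$, let $\Win$ be the set of winning states, and describe an NP certificate consisting of $\Win$, a memoryless good-for-energy strategy $\straa_{\gfe}$ on $\Win$ (which exists by Lemma~\ref{lem:good-for-energy}), and the data of the priority recursion: for each level, a winning region of a residual subgame together with the relevant player-$1$ or player-$2$ attractor and a memoryless strategy enforcing the corresponding reachability. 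Since the recursion strips at least one priority per level it has depth at most $d$, each level stores objects of polynomial size, so the entire certificate is guessed at once and has polynomial size.

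For the structural characterization I would split on the least priority $m$ occurring in $\Win$. If $m$ is even, priority-$m$ states play exactly the role of the B\"uchi target of Lemma~\ref{lem:buchi-cobuchi-np}: from a winning state player~$1$ either forces infinitely many visits to priority-$m$ states while keeping the energy nonnegative (pumping the energy above $2\cdot(n-1)\cdot W$ with $\straa_{\gfe}$ and then running a bounded attractor of length at most $n-1$ toward priority $m$), or eventually confines the play to a residual subgame over the states of priority $\ge m+1$, to which the induction hypothesis applies. If $m$ is odd, priority-$m$ states are losing for parity, so player~$1$ must eventually escape the player-$2$ attractor of these states; deleting that attractor yields a subgame with strictly fewer priorities where the induction hypothesis combined with $\straa_{\gfe}$ supplies the characterization. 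In either case the resulting strategy has the special form announced before Lemma~\ref{lem:two-player-epg}: it alternates between $\straa_{\gfe}$, which keeps the energy nonnegative because every cycle of $G_{\straa_{\gfe}}$ has nonnegative sum and is even whenever the sum is zero, and bounded attractor phases that make parity progress.

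The verification is polynomial: testing that $\straa_{\gfe}$ is good-for-energy on $\Win$ is polynomial by Lemma~\ref{lem:good-for-credit-NP}, computing attractors and the induced subgames is polynomial, and checking the reachability strategies at each of the at most $d$ levels uses only standard graph algorithms; hence the initial credit problem is in NP. I expect the main obstacle to be the soundness of this combination across the recursion, namely proving that interleaving the energy-pumping phases with the nested parity-progress phases never drives the energy below zero while still yielding an even least priority infinitely often. The subtle point is that the good-for-energy guarantee is purely local --- it forbids strictly negative cycles and forces zero-sum cycles to be even --- so the real work is to show, via the cycle decomposition as in Lemma~\ref{lem:buchi-cobuchi-np}, that these local guarantees aggregate along the bounded excursions prescribed by the recursion into a globally energy-safe and parity-correct play.
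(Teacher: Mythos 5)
Your even-priority case matches the paper's proof: guess $\Win$ together with a memoryless good-for-energy strategy $\straa_{\gfe}$ on $\Win$ (verifiable in polynomial time by Lemma~\ref{lem:good-for-credit-NP}), remove the player-$1$ attractor of the least-priority states, and certify the remaining subgame, which has fewer priorities, by the induction hypothesis; the interleaving of energy-pumping phases with bounded attractor excursions is exactly the strategy the paper constructs, and the soundness concern you flag is handled there by explicit energy bookkeeping (pump to $3\cdot n\cdot W$, lose at most $n\cdot W$ in the subgame and at most $n\cdot W$ in the attractor phase).

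The genuine gap is in the odd-priority case. You claim that deleting the player-$2$ attractor $\Omega_1$ of the priority-$m$ states yields a subgame with strictly fewer priorities whose winning region (by the induction hypothesis) characterizes $\Win$. That is false: if $\Win'$ denotes the winning region of the subgame induced by $\Win\setminus\Omega_1$, then $\Attr_1(\Win')$ need not cover $\Win$. A state can be winning for player~$1$ even though player~$1$ cannot force the play into $\Win'$ --- player~$1$ may win from it only because player~$2$, in trying to keep the play away from $\Attr_1(\Win')$, cannot simultaneously exercise his attractor to the priority-$m$ states. This is precisely the phenomenon that forces Zielonka-style algorithms to iterate in the odd case. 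The paper's certificate therefore consists of $\Win$ \emph{and} $\Win'$, where $\Win'$ is certified by the induction on priorities, together with a recursive certificate that every state of $\Win\setminus\Attr_1(\Win')$ is winning in the subgame it induces; that subgame still contains priority-$m$ states, so this inner recursion is an induction on the \emph{number of states}, not on priorities. Accordingly, your claim that the recursion strips one priority per level and hence has depth at most $d$ is wrong in the odd case: the depth is bounded only by $n$ (each peel removes at least one state, which is still enough to keep the certificate polynomial and the verification in NP). As written, your certificate says nothing about winning states outside $\Attr_1(\Win')$, so the NP algorithm you describe would reject some positive instances.
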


\begin{proof}
We prove that there exists an NP algorithm that guesses the set of winning states
in~$G$, which entails the lemma.
The result holds for energy parity games
with two priorities by Lemma~\ref{lem:buchi-cobuchi-np}. 
Assume by induction that the result holds for games with less than $d$ priorities, and
let $G$ be an energy parity game with $d$ priorities and $n$ states. 

First, if the least priority in $G$ is even (assume w.l.o.g. that the least priority is $0$), 
an NP algorithm guesses $(i)$ the set $\Win$ of winning states in $G$, and $(ii)$ a memoryless good-for-energy 
strategy $\straa_{\gfe}$ on $\Win$ which must exist by Lemma~\ref{lem:good-for-energy} (this can be done 
in polynomial time by Lemma~\ref{lem:good-for-credit-NP}).
Let $\Omega_0$ be the player-$1$ attractor of priority-$0$ states in the subgraph of $G$ induced 
by $\Win$. By induction, we can check in NP that player~$1$ is winning in the subgraph of $G$ induced 
by $\Win \setminus \Omega_0$ (because this game has less than $d$ priorities). 
This is sufficient to establish that player~$1$ wins in $G$ with initial credit $n \cdot W$, using the following strategy:
(1) play strategy $\straa_{\gfe}$ as long as the energy level is below $2 \cdot n \cdot W$;
(2) while the game is in $\Win \setminus \Omega_0$, we know that player~$1$ can play a winning strategy
that needs initial credit at most $(n-k) \cdot W$ where $k = \abs{\Omega_0}$ and
such that the energy level drops by at most $(n-k-1) \cdot W$ (see also the proof of Lemma~\ref{lem:two-player-epg}),
and therefore 
(3) if the game leaves $\Win \setminus \Omega_0$, then the energy level is at least $2n - (n-k-1) \cdot W = (n+k+1) \cdot W$
which is enough for player~$1$ to survive while enforcing a visit to a 
priority-$0$ state (within at most $\abs{\Omega_0} = k$ steps) and to proceed to step (1)
with energy level at least $n \cdot W$.
Arguments similar to the proof of Lemma~\ref{lem:buchi-cobuchi-np} shows that this strategy is
winning, with initial credit $n \cdot W$. The time complexity of this algorithm 
is $T(n) = p(n) + T(n-1)$ where $p(\cdot)$ is a polynomial (linear) function for the time complexity
of guessing $\Win$ and $\straa_{\gfe}$, checking that $\straa_{\gfe}$ is good-for-energy,
and computing the the player-$1$ attractor of priority-$0$ states $\Omega_0$.
Therefore $T(n) = O(n^2)$.

Second, if the least priority in $G$ is odd (assume w.l.o.g. that the least priority is $1$), 
consider the set $\Win$ of winning states in $G$, and $\Omega_1$ the player~$2$ attractor of priority-$1$
states in the subgame of $G$ induced by $\Win$. By an argument similar to the proof of 
Lemma~\ref{lem:coBuchi-memoryless}, the set $\Win'$ of states in the subgame induced by $\Win \setminus \Omega_1$
that are winning (for energy parity objective) is nonempty, and player~$1$ is winning in the subgame
induced by $\Win \setminus \Attr_1(\Win')$. An NP algorithm guesses the sets $\Win$ and $\Win'$, 
and checks that player~$1$ is winning in $\Win'$ (which can be done in NP, since
$\Win \setminus \Omega_1$ has less than $d$ priorities), and that player~$1$ is winning in $\Win \setminus \Attr_1(\Win')$ which can be done 
in NP, as shown by an induction proof on the number of states in the game since 
the case of games with one state is clearly solvable in NP.
\qed
\end{proof}

\begin{theorem}[Computational Complexity]\label{theo:en-parity-NP-coNP}
 The problem of deciding the existence of a finite initial credit for energy parity games is in NP~$\cap$~coNP.
\end{theorem}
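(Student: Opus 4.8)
The plan is to combine the two directions in the standard way: NP membership is already available, and essentially all the work lies in the coNP direction. The NP part is nothing more than an appeal to Lemma~\ref{lem:en-parity-NP}, which states that deciding the existence of a finite initial credit with which player~$1$ wins is in NP. So I would dispatch that in one sentence.

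For coNP, I would show that the complement problem --- deciding that player~$1$ has \emph{no} winning strategy for any finite initial credit --- lies in NP. The two ingredients are determinacy and the memorylessness of player~$2$. Energy parity games are determined in the sense that either player~$1$ wins for some finite initial credit, or player~$2$ has a strategy that, against \emph{every} finite initial credit, forces a play violating the energy parity objective; this follows from the determinacy of the associated finite parity game (with energy levels capped at $M = 2 \cdot (n-1) \cdot W$) or directly from Borel determinacy. Crucially, by Lemma~\ref{lem:player-two-memoryless} player~$2$ then has a single \emph{memoryless} spoiling strategy that works uniformly against all initial credits, and this strategy is a function on the polynomial-size state set $Q$ of $G$ (rather than on the exponential-size state set of the capped game). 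Hence the nondeterministic algorithm for the complement guesses such a memoryless strategy $\strab$, which has size polynomial in $G$.

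It then remains to verify in polynomial time that the guessed $\strab$ is indeed spoiling, i.e.\ that in the one-player graph $G_{\strab}$ obtained by fixing $\strab$, player~$1$ cannot satisfy the energy parity objective from $q$ for any finite initial credit. This is exactly the complement of the \emph{one-player} energy parity problem on $G_{\strab}$, so it suffices to argue that deciding a one-player energy parity game is in P (which is closed under complement). This is the step I expect to be the main obstacle. I would handle it by the same good-for-energy-plus-reachability reasoning used in the energy B\"uchi case of Lemma~\ref{lem:buchi-cobuchi-np}, but carried out constructively rather than by guessing: in a one-player graph the winning region, the reachable strongly connected components, and the relevant even-priority targets can all be computed directly in polynomial time, and within each component the existence of a repeatable nonnegative closed walk that hits the target priority (the one-player analogue of reaching a good-for-energy cycle through a low even priority) can be tested by standard negative-cycle / shortest-path computations.

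Putting the pieces together, the complement algorithm guesses the polynomial-size memoryless strategy $\strab$ and verifies in polynomial time that player~$1$ loses the one-player game $G_{\strab}$; this places the complement in NP, hence the initial credit problem in coNP. Combined with Lemma~\ref{lem:en-parity-NP}, this yields membership in NP~$\cap$~coNP.
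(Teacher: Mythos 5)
Your proposal is correct and follows essentially the same route as the paper: NP membership by citing Lemma~\ref{lem:en-parity-NP}, and coNP membership by guessing a memoryless spoiling strategy for player~$2$ (justified by Lemma~\ref{lem:player-two-memoryless}) and verifying it in polynomial time on the one-player graph $G_{\strab}$. The paper phrases that verification directly as a cycle condition (every cycle in $G_{\strab}$ is odd or has negative weight, checked with shortest-path techniques as in Lemma~\ref{lem:good-for-credit-NP}), which is exactly the closed-walk/negative-cycle test underlying your ``one-player energy parity games are in P'' step, so the difference is presentational rather than substantive.
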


\begin{proof}
By Lemma~\ref{lem:en-parity-NP}, the problem is in NP, and since memoryless strategies
are sufficient for player~$2$ (by Lemma~\ref{lem:player-two-memoryless}), a coNP algorithm
can guess a memoryless spoiling strategy and check in polynomial time that player~$1$
is not winning in the resulting player-$1$ game (by Lemma~\ref{lem:one-player-epg-P}). 
\qed
\end{proof}

\section{Algorithm for Energy Parity Games}

We present an 
algorithm to decide the winner in energy parity games with complexity
exponential in the number of states (as for parity games), and only linear in the largest
weight (as for energy games).
Our algorithm is based on a procedure to construct memoryless good-for-energy strategies.
To obtain a good-for-energy strategy, we modify the weights in the game so that 
every simple cycle with (original) sum of weight $0$ gets a strictly positive weight if it is even, 
and a strictly negative weight if it is odd. 
Winning strategies in the energy game with modified
weights 
correspond to good-for-energy strategies in the original game.

\begin{lemma}\label{lem:good-for-energy-complexity}
The problem of deciding the existence of a memoryless good-for-energy strategy 
in energy parity games can be solved in time $O(\abs{E} \cdot \abs{Q}^{d+1} \cdot W)$.
\end{lemma}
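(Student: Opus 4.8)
The plan is to reduce the existence of a memoryless good-for-energy strategy to solving an ordinary energy game on the modified weight function $w'$, and then to read off the stated complexity from a pseudo-polynomial energy-game solver once $w'$ has been scaled to integer weights. First I would make precise the correspondence asserted above: a memoryless strategy $\straa$ is good-for-energy from a state $q$ if and only if $\straa$ is winning in the energy game $\tuple{G,w'}$ from $q$. Since memoryless strategies suffice for energy games~\cite{CAHS03,BFLMS08}, and such a strategy is winning exactly when every simple cycle reachable in $G_{\straa}$ is nonnegative for $w'$, the correspondence reduces to a purely combinatorial claim about a single simple cycle $C$: one has $\sum_{e\in C} w'(e) > 0$ iff $C$ is good, i.e. iff either $\EL(C)>0$, or $\EL(C)=0$ and $C$ is even. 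Writing $\sum_{e\in C} w'(e) = \EL(C) + \sum_{q\in C}\Delta(q)$, the first ingredient is the estimate $\bigl|\sum_{q\in C}\Delta(q)\bigr| < 1$, obtained from $|\Delta(q)| = \frac{n-1}{n^{p(q)+2}} < \frac 1n$ together with $\abs{C}\le n$. This already settles the cases $\EL(C)\ge 1$ and $\EL(C)\le -1$, since $\EL(C)$ is an integer and a perturbation of size less than $1$ cannot change its sign.

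The remaining and most delicate case is $\EL(C)=0$, where the sign of $\sum_{e\in C} w'(e)=\sum_{q\in C}\Delta(q)$ must match the parity of the least priority $k^\ast = \min\{p(q)\mid q\in C\}$. Here I would show that the states of priority $k^\ast$ strictly dominate all higher priorities: the priority-$k^\ast$ contribution has magnitude at least $\frac{n-1}{n^{k^\ast+2}}=\frac{n(n-1)}{n^{k^\ast+3}}$ (its summands all carry the sign $(-1)^{k^\ast}$, so there is no cancellation among them), whereas grouping the remaining at most $n-1$ states by priority and using $n^{k+2}\ge n^{k^\ast+3}$ for $k>k^\ast$ bounds their total contribution by $\frac{(n-1)^2}{n^{k^\ast+3}}$, which is strictly smaller for $n\ge 2$. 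Hence $\operatorname{sign}\bigl(\sum_{q\in C}\Delta(q)\bigr)=(-1)^{k^\ast}$, i.e. positive iff $C$ is even; as a by-product no $w'$-cycle has sum exactly $0$, so ``nonnegative'' and ``positive'' coincide for $w'$ and the correspondence is exact. I expect this domination estimate to be the main obstacle, since it is precisely what forces the specific form of $\Delta$ and the scaling of the exponent $n^{p(q)+1}$ with the number of states.

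Finally, for the complexity I would clear denominators by multiplying $w'$ by $n^{d+1}$, a common denominator of all the $\Delta(q)$ when priorities range over $\{0,\dots,d-1\}$, turning $\tuple{G,w'}$ into an energy game with integer weights bounded in absolute value by $O(n^{d+1}\cdot W)$; this scaling is sign-preserving on every cycle sum and hence preserves winning strategies and the good-for-energy property. Solving an energy game with largest weight $W'$ takes pseudo-polynomial time $O(\abs{E}\cdot\abs{Q}\cdot W')$~\cite{BFLMS08}, and substituting $W' = O(n^{d+1}\cdot W)$ together with $\abs{Q}=n$ yields $O(\abs{E}\cdot \abs{Q}^{d+2}\cdot W)$, as claimed. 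Moreover the solver returns a memoryless winning strategy for $\tuple{G,w'}$, which by the correspondence is a memoryless good-for-energy strategy, and it does so on exactly the set of states from which such a strategy exists (consistent with Lemma~\ref{lem:good-for-energy}).
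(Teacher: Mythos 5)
Your proposal is correct and takes essentially the same route as the paper's proof: the same reduction of good-for-energy strategies to winning strategies in the energy game $\tuple{G,w'}$, the same cycle-by-cycle sign analysis (your domination estimate for the $\EL(C)=0$ case is a regrouped version of the paper's inequality bounding $\EL'(\gamma)-\EL(\gamma)$), and the same scaling by $n^{d+1}$ followed by an $O(\abs{E}\cdot\abs{Q}\cdot W')$ energy-game solver. The only differences are cosmetic, such as your explicit observation that no cycle has $w'$-sum exactly zero.
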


\begin{proof}[of Lemma~\ref{lem:good-for-energy-complexity}]
Given an energy parity game $\tuple{G,p,w}$, we construct a weight function $w'$ such that Player~$1$ 
has a memoryless good-for-energy strategy in $\tuple{G,p,w}$ if and only if Player~$1$ wins in the energy 
game $\tuple{G,w'}$. The maximal weight according to $w'$ becomes $W' = W \cdot \abs{Q}^{d}$, and the complexity result 
then follows from the algorithm of~\cite{CB09,DGR09} which solves energy games in $O(\abs{E} \cdot \abs{Q} \cdot W')$.

Let $0,\dots,d-1$ be the priorities in the energy parity game, and denote by $\EL'$ the energy level function defined according to $w'$.
The function $w'$ is defined by $w'(q,q') = w(q,q') + \Delta(q)$ where
$$\Delta(q) = (-1)^k \cdot \frac{1}{(n+1)^{k+1}}$$
for all $q,q' \in Q$ with $k = p(q)$ and $n = \abs{Q}$.
Note that $\abs{\Delta(q)} < \frac{1}{n^{k+1}}$ for all $q \in Q$ with $k = p(q)$, and in particular $\abs{\Delta(q)} < \frac{1}{n}$ for all $q \in Q$.
Therefore, $n \cdot \abs{\Delta(q)} < 1$ and thus if $\EL(C) < 0$ (i.e., $\EL(C) \leq -1$) for a simple cycle $C$ in~$G$,
then $\EL'(C) < 0$, and if $\EL(C) > 0$, then $\EL'(C) > 0$.  Moreover, if the least priority of a state in $C$ is $k$, then 

$\begin{array}{ll}
\EL'(C) - \EL(C) > \frac{1}{(n+1)^{k+1}} - (n-1) \cdot \frac{1}{(n+1)^{k+2}} >  0 & \text{ if } k \text{ is even, and} {\LARGE \strut}\\
\EL'(C) - \EL(C) < \frac{-1}{(n+1)^{k+1}} + (n-1) \cdot \frac{1}{(n+1)^{k+2}} <  0 & \text{ if } k \text{ is odd.} \\
\end{array}$
\medskip

\noindent So, for simple cycles $C$ with $\EL(C) = 0$, if the least priority in $C$ is even, then $\EL'(C) > 0$,
while if the least priority in $C$ is odd, then $\EL'(C) < 0$. Therefore, a (memoryless) winning strategy in the energy game $\tuple{G,w'}$
can be used as a good-for-energy strategy that avoids odd cycles with sum of weights equal to zero. 
Clearly, the converse also holds, namely if a memoryless strategy is good-for-energy in the energy parity game, 
then it is winning in the energy game. Note that by multiplying the weights in $w'$ by $(n+1)^{d}$, we get integer weights
and the complexity result follows.
\qed
\end{proof}

\begin{algorithm}[t]
\caption{{\sf SolveEnergyParityGame}}
\label{alg:energy-parity-solve}
{
 \AlgData{An energy parity game $\tuple{G,p,w}$ with state space $Q$.}
 \AlgResult{The set of winning states in $\tuple{G,p,w}$ for player~$1$.}
 \Begin{
	\nl \lIf{$Q = \emptyset$}{\KwRet{$\emptyset$}} \;
	\nl Let $k^*$ be the minimal priority in $G$. Assume w.l.o.g. that $k^* \in \{0,1\}$ \;
	\nl Let $G_0$ be the game $G$ \; 
	\nl $i \gets 0$ \;
	\nl \If{$k^* = 0$}
	{
		\nl $A_0 \gets Q \quad$ {\tt /* over-approximation of Player-1 winning states */} \;
		\nl \Repeat{$A_i = A_{i-1}$}
		{
			\nl $A'_i \gets {\sf SolveEnergyGame}(G_{i},w')$ (where $w'$ is defined in Lemma~\ref{lem:good-for-energy-complexity}) \label{alg:Ai}\;
			\nl $X_i \gets \Attr_1(A'_i \cap p^{-1}(0))$ \label{alg:attr1} \;
			\nl Let $G'_{i}$ be the subgraph of $G_{i}$ induced by $A'_{i} \setminus X_i$ \;
			\nl $Z_i \gets (A'_i \setminus X_i) \setminus {\sf SolveEnergyParityGame}(G'_{i}, p, w)$ \;
			\nl $A_{i+1} \gets A'_i \setminus \Attr_2(Z_i)$ \label{alg:attr2}\;
			\nl Let $G_{i+1}$ be the subgraph of $G_{i}$ induced by $A_{i+1}$ \;
			\nl $i \gets i+1$ \;
		}
		\nl \KwRet{$A_i$}\; 
	}
	\nl \If{$k^* = 1$}
	{
		\nl $B_0 \gets Q \quad$ {\tt /* over-approximation of Player-2 winning states */} \;
		\nl \Repeat{$B_i = B_{i-1}$}
		{
			\nl $Y_i \gets \Attr_2(B_i \cap p^{-1}(1))$ \label{alg:attr3}\;
			\nl Let $G_{i+1}$ be the subgraph of $G_{i}$ induced by $B_{i} \setminus Y_i$ \;
			\nl $B_{i+1} \gets B_i \setminus \Attr_1({\sf SolveEnergyParityGame}(G_{i+1}, p, w))$ \label{alg:attr4}\label{alg:CB}\;
			\nl $i \gets i+1$ \;
		}
		\nl \KwRet{$Q \setminus B_i$}\; 
	}
 }
}
\end{algorithm}

We present a recursive fixpoint algorithm for solving energy parity games,
using the result of Lemma~\ref{lem:good-for-energy-complexity}. 
Our algorithm is a generalization of the classical algorithm of McNaughton~\cite{McNaughton93} and Zielonka~\cite{Zielonka98}
for solving parity games. 
The formal description of the algorithm is shown as 
Algorithm~\ref{alg:energy-parity-solve}. 

\smallskip\noindent{\bf Informal description and correctness of Algorithm~\ref{alg:energy-parity-solve}.}
We assume without loss of generality that the least priority in the input game graph is either~0 or~1; 
if not, then we can reduce the priority in every state by~2.
The algorithm considers two cases: (a) when the minimum priority is~0, and (b) when the 
minimum priority is~1.
The details of the two cases are as follows:
\begin{enumerate}
\item[(a)] 
If the least priority in the game is $0$, then we compute
the winning states of Player~$1$ as the limit of a decreasing sequence $A_0,A_1,\dots$ of sets. 
Each iteration removes from $A_i$ some states that are winning for Player~$2$. 
The set $A'_i \subseteq A_i$ contains the states having a good-for-energy strategy (line~\ref{alg:Ai})
which is a necessary condition to win, according to Lemma~\ref{lem:good-for-energy}.
We decompose $A'_i$ into $X_i$ and $A'_i \setminus X_i$, where $X_i$ is the set of states from which
Player~$1$ can force a visit to priority-$0$ states, and $A'_i \setminus X_i$ has less priorities than $A'_i$.
The winning states $Z_i$ in $A'_i \setminus X_i$ for Player~$2$ are also winning in the original game 
(as in $A'_i \setminus X_i$ Player~1 has no edge going out of $A'_i \setminus X_i$).
Therefore we remove $Z_i$ and Player-$2$ attractor to $Z_i$ in $A_{i+1}$. 
The correctness argument for this case is similar to the proof of Lemma~\ref{lem:en-parity-NP}, namely
that when $A_i = A'_i = A_{i-1}$, Player~$1$ wins by playing a winning strategy in $A'_i \setminus X_i$ (which exists by
an inductive argument on the number of recursive calls of the algorithm), and whenever the game enters $X_i$,
then Player~$1$ can survive while forcing a visit to a priority-$0$ state, and then uses a good-for-energy
strategy to recover enough energy to proceed. 

\item[(b)] The second part of the algorithm (when the least priority in the game is $1$) 
computes a decreasing sequence $B_0,B_1,\dots$ 
of sets containing the winning states of Player~$2$. 
The correctness is proven in a symmetric way using the same argument as in the 
second part of the proof of Lemma~\ref{lem:en-parity-NP}.
\end{enumerate}
We obtain the following result, where $d$ is the number of priorities in the game, and $W$ is the 
largest weight.
 
\begin{theorem}[Algorithmic Complexity]\label{theo:en-parity-alg}
The finite initial credit problem for energy parity games 
(i.e., deciding the existence of a finite initial credit) 
can be solved in time $O(\abs{E} \cdot d \cdot \abs{Q}^{d+2} \cdot W)$.
\end{theorem}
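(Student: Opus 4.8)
The plan is to establish two things about Algorithm~\ref{alg:energy-parity-solve}: that it computes exactly the set of winning states, and that it runs within the stated bound. I would prove correctness by induction on the number $d$ of priorities, using the termination of the two \textbf{Repeat} loops as the inner induction for a fixed $d$, and then read off the complexity from a recurrence that weighs the local work of each loop iteration (dominated by the energy-game solver of Lemma~\ref{lem:good-for-energy-complexity}) against the depth and width of the recursion. Since the algorithm is a generalization of the McNaughton--Zielonka scheme, the overall skeleton mirrors the correctness proof for parity games, with the energy dimension handled through good-for-energy strategies.

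For correctness I would treat the case $k^*=0$ in detail (the case $k^*=1$ being symmetric with the roles of the two players exchanged). The structural key is that $X_i = \Attr_1(A'_i \cap p^{-1}(0))$ is a Player-$1$ attractor inside $A'_i$, so its complement $A'_i \setminus X_i$ induces a genuine subgame in which Player~$1$ can never force the play out and which, having lost all priority-$0$ states, carries at most $d-1$ priorities; this legitimizes the recursive call and drives the induction. Soundness (every state of the returned limit $A_i$ is winning for Player~$1$) follows from the special-structure strategy already described informally and exploited in Lemma~\ref{lem:en-parity-NP}: Player~$1$ plays a memoryless good-for-energy strategy (which exists by Lemma~\ref{lem:good-for-energy}) to keep every traversed cycle nonnegative, forces a visit to a priority-$0$ state within at most $n-1$ steps whenever the play enters $X_i$, and defers to the inductively winning strategy inside $A'_i \setminus X_i$. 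Completeness (every removed state is winning for Player~$2$) uses that $Z_i$ together with its Player-$2$ attractor is winning for Player~$2$ already in $G_i$: because Player~$1$ has no edge leaving the subgame $A'_i \setminus X_i$, a Player-$2$ winning strategy on that subgame transfers unchanged to $G_i$. Verifying these two inclusions at the fixpoint, and in particular checking that the energy bookkeeping across the alternation between the good-for-energy phase and the attractor-to-priority-$0$ phase never drives the credit negative, is the main obstacle.

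For the running time, let $T(n,d)$ bound the time on a game with $n=\abs{Q}$ states and $d$ priorities. Each iteration of a \textbf{Repeat} loop performs one call to the energy-game solver on the modified weights $w'$, costing $O(\abs{E}\cdot n^{d+2}\cdot W)$ by Lemma~\ref{lem:good-for-energy-complexity}, plus a constant number of attractor computations (polynomial, hence dominated) and exactly one recursive call on a subgame with at most $d-1$ priorities. Since the sets $A_0 \supsetneq A_1 \supsetneq \cdots$ (resp. $B_0 \supsetneq B_1 \supsetneq \cdots$) strictly decrease until the fixpoint, each loop runs at most $n$ times, so
\[
T(n,d) \;\le\; n\cdot\bigl(O(\abs{E}\cdot n^{d+2}\cdot W) + T(n,d-1)\bigr).
\]
Unrolling down to the base case $d=0$ (a plain energy game or an immediate loss, solvable in $O(\abs{E}\cdot n\cdot W)$), the level with $j$ priorities is reached $n^{d-j+1}$ times and does $O(\abs{E}\cdot n^{j+2}\cdot W)$ work there, contributing $n^{d-j+1}\cdot O(\abs{E}\cdot n^{j+2}\cdot W) = O(\abs{E}\cdot n^{d+3}\cdot W)$, a quantity independent of $j$. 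Summing the $d$ levels (the base term $n^{d}\cdot T(n,0)=O(\abs{E}\cdot n^{d+1}\cdot W)$ being dominated) yields $T(n,d) = O(\abs{E}\cdot d\cdot n^{d+3}\cdot W)$, which is precisely the claimed bound $O(\abs{E}\cdot d\cdot \abs{Q}^{d+3}\cdot W)$.
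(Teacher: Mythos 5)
Your proposal matches the paper's own proof in both structure and substance: correctness is delegated to the informal analysis of Algorithm~\ref{alg:energy-parity-solve} (the fixpoint argument mirroring Lemma~\ref{lem:en-parity-NP}, with $Z_i$ transferring to the original game because Player~$1$ cannot exit $A'_i \setminus X_i$), and the complexity follows from the identical recurrence $T(d) \leq \abs{Q}\cdot(\GFE(d) + T(d-1))$ with $\GFE(d) = O(\abs{E}\cdot\abs{Q}^{d+2}\cdot W)$ from Lemma~\ref{lem:good-for-energy-complexity}, unrolled so that each of the $d$ levels contributes $O(\abs{E}\cdot\abs{Q}^{d+3}\cdot W)$. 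The only cosmetic difference is your base case ($d=0$ as a plain energy game rather than the paper's empty game with $T(0)=O(1)$), which does not affect the bound.
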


\begin{proof}
This problem is solved by Algorithm~\ref{alg:energy-parity-solve}. The key correctness argument 
is given above. The complexity result assumes that good-for-energy strategies can be computed
in time $\GFE(d) = O(\abs{E} \cdot \abs{Q}^{d+1} \cdot W)$ (see Lemma~\ref{lem:good-for-energy-complexity}).

Let $T(d)$ be the complexity of Algorithm~\ref{alg:energy-parity-solve}, parameterized by the number
of priorities in the game. Note that the attractors (lines~\ref{alg:attr1}, \ref{alg:attr2}, \ref{alg:attr3}, \ref{alg:attr4}) 
can be computed in $O(\abs{E})$ which is subsumed by $\GFE(d)$.
Since every recursive call removes at least one state from~$A_i$ (or from~$B_i$), 
there are at most $\abs{Q}$ recursive calls, and since the number of priorities decreases in a recursive call, 
we get 
$$T(d) \leq \abs{Q} \cdot (\GFE(d) + T(d-1))$$ 
and since $\abs{Q} \cdot \GFE(d) = \GFE(d+1)$,
we get $T(d) \leq \GFE(d+1) + \abs{Q} \cdot (T(d-1))$.
Since $d=0$ corresponds to a game with empty state space, we have $T(0) = O(1)$
and it is easy to see that $T(d) \leq d \cdot \GFE(d+1) + \abs{Q}^d$.
The result follows.
\qed
\end{proof}

\noindent{\bf Energy B\"uchi and coB\"uchi games.}
In the special case of energy B\"uchi objectives, since $d$ is constant ($d=2$), the analysis
in the proof of Theorem~\ref{theo:en-parity-alg} gives time complexity $O(\abs{E} \cdot \abs{Q}^{4} \cdot W)$.
In the case of energy coB\"uchi objectives, the smallest priority is~1 and there is only one other 
priority. 
In this case, line~\ref{alg:CB} of Algorithm~\ref{alg:energy-parity-solve} requires
to solve an energy parity game with one priority which can be solved as 
simple energy games in $O(\abs{E} \cdot \abs{Q} \cdot W)$.
Thus in the special case of energy coB\"uchi objectives Algorithm~\ref{alg:energy-parity-solve} 
has $O(\abs{E} \cdot \abs{Q}^{2} \cdot W)$ running time.

\smallskip\noindent{\bf Computing the minimum initial credit.}
Note that if the procedure {\sf SolveEnergyGame} used in Algorithm~\ref{alg:energy-parity-solve} also
computes the minimum initial credit $v(q)$ in each winning state $q$ of the energy game $\tuple{G_{i},w'}$
(and it is the case of the algorithm in~\cite{CB09,DGR09}), then we can also obtain the minimum initial credit 
in the energy parity game $\tuple{G,p,w}$ by rounding $v(q)$ to an integer, either up or 
down. Therefore, computing the minimum initial credit in energy parity games can be done in 
time $O(\abs{E} \cdot d \cdot \abs{Q}^{d+2} \cdot W)$.

\begin{table}[t]
\begin{center}
\scalebox{0.89}{\!
\begin{tabular}{|l|c|c|c|c|}
\hline
             & Player~1 & Player~2  & $\ $ Computational $\ $ & Algorithmic \\
 \multicolumn{1}{|c|}{Objective}  & Strategy & Strategy  & Complexity    & Complexity  \\
\hline
\hline
{\large \strut} Energy coB\"uchi {\large \strut}  &    
Memoryless & {\large \strut} Memoryless {\large \strut} & NP $\cap$ coNP & 
$O(\abs{E} \cdot \abs{Q}^{2} \cdot W)$ \\ 
\hline
{\large \strut} Energy B\"uchi {\large \strut}  &    
Optimal memory: & Memoryless & NP $\cap$ coNP &  
$O(\abs{E} \cdot \abs{Q}^{4} \cdot W)$ \\        
  &  {\large \strut} $2 \! \cdot \! (\abs{Q}-1)\! \cdot \! W +1$ {\large \strut} & & &  \\ 
\hline
{\large \strut} Energy parity   &    
Memory at most: & Memoryless & NP $\cap$ coNP &  
{\large \strut} $O(\abs{E} \cdot d \cdot \abs{Q}^{d+2} \cdot W)$ {\large \strut} \\
& {\large \strut} $\abs{Q}\! \cdot \! d \! \cdot \! W$ {\large \strut} & &  &  \\
\hline
\end{tabular}
}\end{center}
\caption{ Strategy, computational and algorithmic complexity of energy parity games.}\label{tab2}
\end{table}

Our results about the memory requirement of strategies, and the computational and algorithmic complexity
of energy parity games are summarized in Table~\ref{tab2}.

\section{Relationship with Mean-payoff Parity Games}

We show that there is a tight relationship between energy parity games and mean-payoff
parity games. The work in~\cite{ChatterjeeHJ05} shows that optimal\footnote{A strategy is optimal if it is winning
for $\Parity_G(p) \cap \MeanPayoff_G(\nu)$ with the largest possible threshold $\nu$. It is known that the largest 
threshold is rational~\cite{ChatterjeeHJ05}.}
strategies in mean-payoff parity games may require infinite memory, though they can be decomposed into
several memoryless strategies.  
We show that energy parity games are polynomially equivalent to mean-payoff parity games, 
leading to NP~$\cap$~coNP membership of the problem of deciding the winner in mean-payoff parity games,
and leading to
an algorithm for solving such games which is conceptually much simpler than the algorithm of~\cite{ChatterjeeHJ05},
with essentially the same complexity (linear in the largest weight, and exponential in the number of states only).

\begin{theorem}\label{theo:mean-payoff-parity-to-energy-parity}
Let $\tuple{G,p,w}$ be a game, and let $\epsilon = \frac{1}{\abs{Q}+1}$. 
Player~$1$ has a winning strategy in the mean-payoff parity game $\tuple{G,p,w}$ if and only if 
player~$1$ has a winning strategy in the energy parity game $\tuple{G,p,w+\epsilon}$.
\end{theorem}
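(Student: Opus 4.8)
The plan is to establish the equivalence in both directions by relating mean-payoff values to energy levels along cycles, exploiting the fact that the chosen $\epsilon = \frac{1}{|Q|+1}$ is small enough that perturbing each edge weight by $\epsilon$ does not change the sign of any simple-cycle sum, yet forces strictly-positive cycles in the energy game to correspond to nonnegative mean-payoff. The central observation I would record first is arithmetic: for any simple cycle $C$ (with $|C| \le |Q|$ edges) the quantity $\sum_{e \in C} w(e)$ is an integer, so $\sum_{e \in C}(w(e) + \epsilon) = \sum_{e \in C} w(e) + \epsilon \cdot |C|$, and since $0 < \epsilon \cdot |C| < 1$ we have $\sum_{e \in C}(w+\epsilon)(e) > 0$ if and only if $\sum_{e \in C} w(e) \ge 0$, while $\sum_{e \in C}(w+\epsilon)(e) \le 0$ forces $\sum_{e \in C} w(e) < 0$. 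This is the analogue of the cycle computation already carried out in the commented-out energy/limit-energy section, and it is the lemma that glues the two games together.

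\textbf{Forward direction} ($\MeanPayoff$ parity $\Rightarrow$ energy parity with $w+\epsilon$). Suppose player~1 wins the mean-payoff parity game $\tuple{G,p,w}$, i.e.\ he has a strategy securing the parity condition together with $\MP(w,\rho) \ge 0$ for all outcomes $\rho$. By the structure results of~\cite{ChatterjeeHJ05}, player~1 has a (possibly infinite-memory) winning strategy; I would pass to a finite-memory winning strategy $\straa$ whose product graph $G_{\straa}$ has the property that every reachable simple cycle is even-dominated and has nonnegative $w$-sum (if some reachable cycle had negative $w$-sum or were odd-dominated, the opponent could pump it to violate the mean-payoff threshold or the parity condition). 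By the arithmetic lemma, every such cycle then has strictly positive $(w+\epsilon)$-sum, so in $G_{\straa}$ all simple cycles are positive with respect to $w+\epsilon$; by the good-for-energy / nonnegative-cycle characterization this means $\straa$ maintains a bounded energy deficit, hence wins the energy parity game $\tuple{G,p,w+\epsilon}$ from $q_0$ with a finite initial credit (at most $(|Q|-1)\cdot W'$, where $W'$ accounts for the perturbed weights).

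\textbf{Reverse direction} ($\PosEnergy$ parity with $w+\epsilon$ $\Rightarrow$ $\MeanPayoff$ parity). Suppose player~1 wins the energy parity game $\tuple{G,p,w+\epsilon}$ with some finite initial credit $c_0$. By Lemma~\ref{lem:two-player-epg} he has a \emph{finite-memory} winning strategy $\straa$, and the energy objective forces every reachable simple cycle in $G_{\straa}$ to be nonnegative with respect to $w+\epsilon$ while the parity objective forces the least priority infinitely often to be even. The arithmetic lemma turns each nonnegative $(w+\epsilon)$-cycle into a nonnegative $w$-cycle. Since the energy level stays bounded below (it never drops under $-c_0$) and a finite-memory strategy produces a play whose tail decomposes into reachable simple cycles of $G_{\straa}$, the averaged weight over any long prefix is, up to an $O(|Q|\cdot W / n)$ transient term, a convex combination of nonnegative cycle means; taking $\liminf$ yields $\MP(w,\rho) \ge 0$, and the parity condition is inherited directly. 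Hence $\straa$ is winning for the mean-payoff parity objective.

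\textbf{The main obstacle} I anticipate is the reverse direction's mean-payoff bound: an energy-winning strategy only guarantees the energy never falls below $-c_0$, which bounds $\liminf_n \EL(w+\epsilon,\rho(n))$ from below but does \emph{not} immediately give $\MP(w+\epsilon,\rho) \ge 0$ in the $\liminf$ sense, because the energy could stay bounded while oscillating. The clean way around this is to leverage finiteness of memory (Lemma~\ref{lem:two-player-epg}): in the finite product graph $G_{\straa}$ every infinite outcome visits a bottom strongly connected component whose reachable simple cycles are all $(w+\epsilon)$-nonnegative, so the long-run average of weights along the play is pinned to a nonnegative value regardless of oscillation. Establishing this bridge — that bounded energy under a \emph{finite-memory} strategy forces nonnegative $\liminf$ mean-payoff — is the technically delicate step, and it is precisely where the choice of a finite-memory (rather than arbitrary) winning strategy in the energy game is essential.
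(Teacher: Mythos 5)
Both directions of your proof founder on the same point, which is in fact the crux of this theorem: in mean-payoff parity games, winning with threshold $0$ may require \emph{infinite} memory. In the forward direction you ``pass to a finite-memory winning strategy $\sigma$ whose product graph has every reachable simple cycle even-dominated with nonnegative $w$-sum''; such a strategy need not exist. Take $Q=\{a,b\}$, both states owned by player~1, $p(a)=1$, $p(b)=0$, and edges $a\to a$ of weight $0$, $a\to b$ of weight $-1$, $b\to a$ of weight $0$. Player~1 wins the mean-payoff parity game $\tuple{G,p,w}$ (visit $b$ infinitely often but with vanishing frequency, e.g.\ staying at $a$ for $2^k$ steps before the $k$-th visit to $b$), yet every finite-memory strategy satisfying the parity condition visits $b$ with positive frequency and hence has mean-payoff value $<0$; equivalently, its product graph always contains a reachable cycle of negative $w$-sum. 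The paper's proof avoids this by invoking the result of~\cite{ChatterjeeHJ05} that for every $\varepsilon>0$ there is a \emph{finite-memory} strategy winning for parity with threshold $-\varepsilon$: the reachable cycles of that strategy have $w$-average only $\geq -\epsilon$, which is exactly enough to make every $(w+\epsilon)$-sum nonnegative, and that is all the energy objective needs.

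The reverse direction has a matching gap. Your arithmetic lemma (nonnegative $(w+\epsilon)$-sum implies nonnegative $w$-sum) is valid only for cycles with at most $\abs{Q}$ edges, but the cycles that the energy objective forces to be nonnegative are simple cycles of the \emph{product} graph $G_\sigma$, whose length can be as large as $\abs{M}\cdot\abs{Q}$; for those, $\epsilon\cdot\abs{C}$ can exceed $1$, so integrality gives nothing (and the projected closed walk decomposes into simple $G$-cycles whose individual signs are unconstrained). Indeed your conclusion --- that the energy-parity winning strategy itself wins the mean-payoff parity game --- fails in the example above: the finite-memory strategy that loops twice at $a$ between consecutive visits to $b$ wins the energy parity game $\tuple{G,p,w+\frac13}$ (its unique product cycle has four edges, $(w+\epsilon)$-sum $\frac13$, and is even-dominated), but its mean-payoff value is $-\frac14<0$. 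What such a strategy actually guarantees is only $\MP(w,\rho)\geq -\epsilon$. The paper closes this gap \emph{value-wise}, not strategy-wise: by~\cite{ChatterjeeHJ05} the optimal mean-payoff parity value is a rational $\frac{e}{d}$ with $1\leq d\leq \abs{Q}$ and $\abs{e}\leq d\cdot W$, so a value $\geq -\epsilon > -\frac{1}{\abs{Q}}$ must already be $\geq 0$; the witnessing optimal strategy may be a different, infinite-memory one. Note also that your ``main obstacle'' paragraph worries about $\liminf$ oscillation, which the cycle-decomposition argument does handle; the real obstacles are the two memory/integrality issues above.
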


\begin{proof}
We present the two directions of the proof.
\begin{enumerate}
\item 
Assume that player~$1$ wins from a state~$q_0$ in the mean-payoff parity game
$\tuple{G,p,w}$. Then, for all $\varepsilon > 0 $ there exists a finite-memory
winning strategy $\straa$ in $G$ with threshold $-\varepsilon$ from 
$q_0$~\cite{ChatterjeeHJ05}. 
Consider a finite-memory winning strategy $\straa$ for 
$\epsilon=\frac{1}{\abs{Q}+1}$. 
We show that $\straa$ is winning in the energy parity game 
$\tuple{G,p,w+\epsilon}$ from $q_0$.

Consider the graph $G_\straa$. By definition of $\straa$, 
the average of the weights (according to $w$) in all cycles of 
$G_\straa$ reachable from $q_0$ is at least $-\epsilon$, and the least priority in every such cycle is even. 
Therefore, in every outcome of $\straa$ from $q_0$, the parity condition is 
satisfied, and the sum of weights (according to $w+\epsilon$) is nonnegative, 
hence $\straa$ is winning in the energy parity game $\tuple{G,p,w+\epsilon}$ 
from $q_0$, with initial credit $\abs{G_\straa}\cdot W$, where $\abs{G_\straa}$
denotes the number of states in $G_\straa$.

\item 
Assume that player~$1$ wins from a state~$q_0$ in the energy parity game
$\tuple{G,p,w+\epsilon}$. Then, there exists a finite-memory strategy 
$\straa$ in $G$ from $q_0$ that ensures in $G_{\straa}$ that all cycles
reachable from $q_0$ have least priority even, and nonnegative sum of weights (according to
$w+\epsilon$), i.e., the average of the weights (according to $w$) is 
at least $-\epsilon$. Therefore, the value of strategy $\straa$ 
in the mean-payoff parity game $\tuple{G,p,w}$ from $q_0$ is at least $-\epsilon$.

Now, the results of~\cite{ChatterjeeHJ05} show that the optimal value 
that player~$1$ can ensure in a mean-payoff parity game is a rational 
number of the form $\frac{e}{d}$ such that $1 \leq d \leq \abs{Q}$ 
and $\abs{e} \leq d \cdot W$. 
It follows that if the value for mean-payoff parity games is greater than 
$\frac{1}{\abs{Q}}$, then the value is at least~0.
Since $\epsilon < \frac{1}{\abs{Q}}$, it follows that
there must exist a strategy for player~$1$ in $\tuple{G,p,w}$ from $q_0$ 
with value at least $0$, hence player~$1$ is winning in the mean-payoff parity 
game $\tuple{G,p,w}$ from $q_0$.
\end{enumerate}
The result follows.
\qed
\end{proof}

\begin{corollary} Given a mean-payoff parity game, whether player~1 has a 
winning strategy from a state $q_0$ can be decided in NP~$\cap$~coNP.
\end{corollary}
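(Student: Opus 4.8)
The plan is to derive the corollary directly from Theorem~\ref{theo:mean-payoff-parity-to-energy-parity} and the NP~$\cap$~coNP membership of energy parity games (Theorem~\ref{theo:en-parity-NP-coNP}), exploiting that NP~$\cap$~coNP is closed under polynomial-time many-one reductions. Given a mean-payoff parity game $\tuple{G,p,w}$ and a state $q_0$ (I take threshold $\nu=0$; an arbitrary rational $\nu$ is reduced to this case at the end), I ask whether player~1 wins $\Parity_G(p) \cap \MeanPayoff_G(0)$ from $q_0$. By Theorem~\ref{theo:mean-payoff-parity-to-energy-parity} with $\epsilon = \frac{1}{\abs{Q}+1}$, this is equivalent to player~1 winning the energy parity game $\tuple{G,p,w+\epsilon}$, i.e., to the existence of a finite initial credit with which player~1 wins $\Parity_G(p) \cap \PosEnergy_G(\cdot)$, which is exactly the decision problem handled by Theorem~\ref{theo:en-parity-NP-coNP}.

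The second step is to check that the target is a bona fide energy parity instance of polynomial size, so that the complexity result applies under the binary encoding of weights. The function $w+\epsilon$ is rational, but since $\epsilon=\frac{1}{\abs{Q}+1}$ is constant across edges I would scale every weight by the common factor $\abs{Q}+1$, obtaining the integer weight $w'(e) = (\abs{Q}+1)\cdot w(e)+1$. Multiplying all weights by a fixed positive constant rescales every partial sum $\EL(\gamma)$ by $\abs{Q}+1$ and leaves priorities untouched; hence it preserves the sign of every energy level along every prefix (so the $\PosEnergy$ component is preserved, up to correspondingly scaling the initial credit) as well as the parity condition. Thus player~1 wins $\tuple{G,p,w+\epsilon}$ for some finite credit iff player~1 wins $\tuple{G,p,w'}$ for some finite credit. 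Since $\abs{w'(e)} \le (\abs{Q}+1)\cdot W + 1$, the instance $\tuple{G,p,w'}$ has bit-length polynomial in that of the input and is computable in polynomial time.

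Composing the two steps yields a polynomial-time many-one reduction from the mean-payoff parity decision problem (with threshold $0$) to the finite initial credit problem for energy parity games; by Theorem~\ref{theo:en-parity-NP-coNP} the latter lies in NP~$\cap$~coNP, and closure of both NP and coNP under such reductions gives the claim. For an arbitrary rational threshold $\nu$, I would first replace $w$ by $w-\nu$, turning the requirement $\MP(\pi)\ge\nu$ into $\MP(\pi)\ge 0$, and then clear the denominator of $\nu$, which keeps the weights integral and of polynomial length. There is no deep obstacle here; the only point demanding care is the scaling in the second step, where one must verify both that the blow-up factor $\abs{Q}+1$ keeps the weights polynomial under binary encoding and that multiplication by a positive constant genuinely preserves each component of the combined objective.
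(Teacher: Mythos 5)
Your proof is correct and follows essentially the same route as the paper: combine Theorem~\ref{theo:mean-payoff-parity-to-energy-parity} with Theorem~\ref{theo:en-parity-NP-coNP}, using closure of NP~$\cap$~coNP under polynomial-time reductions. Your explicit scaling of the rational weights $w+\epsilon$ to integers (and the reduction of a general threshold $\nu$ to threshold $0$) just spells out details the paper leaves implicit in its footnote on rational weights.
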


\begin{corollary}
The problem of deciding the winner in mean-payoff parity games can be solved in time
$O(\abs{E} \cdot d \cdot \abs{Q}^{d+2} \cdot W \cdot (\abs{Q}+1))$.
\end{corollary}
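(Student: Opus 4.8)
The plan is to combine the reduction of Theorem~\ref{theo:mean-payoff-parity-to-energy-parity} with the algorithmic bound of Theorem~\ref{theo:en-parity-alg}, paying attention only to how the rational perturbation $\epsilon = \frac{1}{\abs{Q}+1}$ affects the value of the largest weight that is fed to Algorithm~\ref{alg:energy-parity-solve}. First, given a mean-payoff parity game $\tuple{G,p,w}$, Theorem~\ref{theo:mean-payoff-parity-to-energy-parity} tells us that player~$1$ wins it from a state $q_0$ if and only if player~$1$ wins the energy parity game $\tuple{G,p,w+\epsilon}$ from $q_0$. Thus deciding the winner of the mean-payoff parity game reduces to deciding the winner of an energy parity game. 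The only obstacle is that $w+\epsilon$ is a \emph{rational} weight function, whereas Algorithm~\ref{alg:energy-parity-solve} and the complexity statement of Theorem~\ref{theo:en-parity-alg} are phrased for integer weights encoded in binary.

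Second, I would clear denominators by scaling with the positive integer $\abs{Q}+1$, obtaining the integer weight function $w''$ defined by $w''(q,q') = (\abs{Q}+1)\cdot w(q,q') + 1$ for every edge $(q,q')$. I would then argue that multiplying all weights by a fixed positive constant leaves the winner of an energy parity game unchanged: the parity condition depends only on priorities and is untouched, and for the energy condition the partial sums $\EL(\cdot)$ are simply scaled by the same positive factor, so a play can be kept energy-nonnegative from some finite initial credit under $w+\epsilon$ if and only if it can under $w''$ (the minimum initial credit scales accordingly, preserving the existence of a finite witness). Hence the winning region of $\tuple{G,p,w+\epsilon}$ coincides with that of $\tuple{G,p,w''}$, and it suffices to run the integer-weight algorithm on $\tuple{G,p,w''}$.

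Third, the largest weight of $w''$ in absolute value is at most $(\abs{Q}+1)\cdot W + 1 = O\big((\abs{Q}+1)\cdot W\big)$. Substituting this quantity in place of $W$ in the bound $O(\abs{E}\cdot d\cdot \abs{Q}^{d+3}\cdot W)$ of Theorem~\ref{theo:en-parity-alg} yields the claimed running time $O\big(\abs{E}\cdot d\cdot \abs{Q}^{d+3}\cdot W\cdot(\abs{Q}+1)\big)$, which completes the argument.

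There is essentially no deep step here; the corollary is a bookkeeping consequence of the two theorems it cites. The only point requiring genuine care—and thus the main (minor) obstacle—is the scale-invariance argument for the energy objective when clearing denominators, ensuring that the rational instance $\tuple{G,p,w+\epsilon}$ and the integer instance $\tuple{G,p,w''}$ have the same winner, so that the integer-weight algorithm of Theorem~\ref{theo:en-parity-alg} is legitimately applicable and the factor $(\abs{Q}+1)$ in the largest weight is correctly accounted for in the final bound.
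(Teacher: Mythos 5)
Your proposal is correct and matches the paper's intended argument exactly: the corollary follows by combining Theorem~\ref{theo:mean-payoff-parity-to-energy-parity} with Theorem~\ref{theo:en-parity-alg}, clearing the denominator $\abs{Q}+1$ so that each weight becomes the integer $(\abs{Q}+1)\cdot w(e)+1$ (indeed, this scaled form is precisely how the paper treats $w+\epsilon$ for complexity purposes), and substituting the resulting largest weight $O\bigl((\abs{Q}+1)\cdot W\bigr)$ into the algorithmic bound. The scale-invariance observation you flag is the right (and only) point of care, and your handling of it is sound.
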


\smallskip\noindent{\bf Acknowledgements.} 
We thank Thomas A. Henzinger and Barbara Jobstmann for inspiring discussions,
and Patricia Bouyer, Nicolas Markey, J\"org Olschewski, and Michael Ummels
for helpful comments on a preliminary draft.

\bibliographystyle{plain}
\bibliography{biblio}



\end{document}